\documentclass[12pt]{article}
\usepackage{amsmath, amssymb, amsthm}
\usepackage[hypcap=true]{caption}
\usepackage{comment}
\usepackage{enumitem}
\usepackage{float}
\usepackage{fullpage}
\usepackage{hyperref}
\usepackage{mathtools}
\usepackage{soul}
\usepackage{subcaption}
\usepackage{titlesec}

\theoremstyle{plain}
\newtheorem{theorem}{Theorem}[section]
\newtheorem{corollary}{Corollary}[section]
\newtheorem{lemma}{Lemma}[section]
\newtheorem{proposition}{Proposition}[section]

\theoremstyle{definition}

\theoremstyle{remark}

% Paper specific macros -----------------------------------------------------------
\newcommand{\m}{\mathcal{M}} 
 
\newcommand{\ot}{\widetilde{O}} 

% Pair macros ---------------------------------------------------------------------
% inner product

% absolute value
\DeclarePairedDelimiter\abs{\lvert}{\rvert}
% a set
\DeclarePairedDelimiter\set{\{}{\}}
% parens
\DeclarePairedDelimiter\parens{(}{)}
% tuple, alias for parens

% square brackets

% rounding off

% floor function
\DeclarePairedDelimiter\floor{\lfloor}{\rfloor}
% ceiling function

% length of some vector, element

% norm (same as length)

% "lifting" of a residue class

% Blackboard characters -----------------------------------------------------------
\newcommand{\bb}[1]{\ensuremath{\mathbb{#1}}}

%renew overwrites the old macro - \L used by latex isn't useful to me.

\newcommand{\R}{\bb{R}}

\newcommand{\Z}{\bb{Z}}

% sets in calligraphic type

% sets in script

%set lists
\newcommand{\setto}[2][1]{\set{#1,\ldots,#2}}

\begin{document}

\title{Approximately Sampling Elements with Fixed Rank in Graded Posets}
\author{
  Prateek Bhakta
  \footnote{
    Department of Math and Computer Science, University of Richmond, Richmond, VA 23173.
    Email: \href{mailto:pbhakta@richmond.edu}{\texttt{pbhakta@richmond.edu}}.
    Supported in part by NSF grant CCF-1526900.
  }
	\and Ben Cousins
  \footnote{
    School of Computer Science, Georgia Institute of Technology, Atlanta, GA 30332.
    Email: \href{mailto:bcousins3@gatech.edu}{\texttt{bcousins3@gatech.edu}}.
    Supported in part by NSF grants CCF-1217793 and EAGER-1415498.
  }
  \and Matthew Fahrbach
  \thanks{
    School of Computer Science, Georgia Institute of Technology, Atlanta, GA 30332.
    Email: \href{mailto:matthew.fahrbach@gatech.edu}{\texttt{matthew.fahrbach@gatech.edu}}.
    Supported in part by NSF grant DGE-1650044 and a Tau Beta Pi Fellowship.
  }
  \and Dana Randall
	\thanks{
    School of Computer Science, Georgia Institute of Technology, Atlanta, GA 30332.
    Email: \href{mailto:randall@cc.gatech.edu}{\texttt{randall@cc.gatech.edu}}.
    Supported in part by NSF grants CCF-1526900 and CNS-1544090.
  }
}
\date{\today}
\maketitle
% TODO(fahrbach): Make sure all grants are correct.

\begin{abstract} 
%Graded posets, or partially ordered sets, arise in the
%context of many natural counting problems in combinatorics.
%Such problems are often
%\#\textbf{P}-complete, so we consider approximation algorithms for
%counting and sampling from these sets.  
%
%Graded posets frequently arise throughout combinatorics.
%Many naturally related exact counting problems are \#\textbf{P}-complete,
%so we consider approximation algorithms for counting and uniform sampling.
%
%Graded posets arise throughout combinatorics, and in various settings it is
%useful to count their elements.  
%Many natural counting problems are
%\#\textbf{P}-complete, so we instead consider approximation algorithms for
%counting and sampling.  
Graded posets frequently arise throughout combinatorics, where it is natural
to try to count the number of elements of a fixed rank.
These counting problems are often \#\textbf{P}-complete, so we consider
approximation algorithms for counting and uniform sampling.
We show that for certain classes of posets, biased
Markov chains that walk along edges of their Hasse diagrams allow us to
approximately generate samples with any fixed rank in expected polynomial time.
Our arguments do not rely on the typical proofs of log-concavity, which are
used to construct a stationary distribution with a specific mode in order to
give a lower bound on the probability of outputting an element of the desired
rank.
%to construct weights that cause the
%stationary distribution to have a specific mode in order to provide a lower
%bound on the probability of outputting an element of a desired rank. 
Instead,
we infer this directly from bounds on the mixing time of the chains through a
method we call \textit{balanced bias}.

A noteworthy application of our method is sampling restricted classes of
integer partitions of $n$.  We give the first provably efficient Markov chain
algorithm to uniformly sample integer partitions of $n$ from general restricted
classes.
Several observations allow us to improve the efficiency of this
chain to require $O(n^{1/2}\log(n))$ space, and for
unrestricted integer partitions, expected $O(n^{9/4})$ time.
%(ignoring logarithmic factors).
%Previous approaches to sample integer partitions have relied 
%dynamic programming or Boltzmann samplers.
%A related application is sampling plane partitions with a fixed volume
%restricted to lie between a lower and upper envelope.
Related applications include sampling permutations with a fixed number of
inversions and lozenge tilings on the triangular lattice with a fixed average
height.
\end{abstract}

\newpage

\section{Introduction}\label{sec:intro}
Graded posets are partially ordered sets equipped with a unique rank function
that both respects the partial order and such that neighboring elements in the
{\it Hasse diagram} of the poset have ranks that differ by $\pm 1$.  Graded
posets arise throughout combinatorics, including permutations ordered by
numbers of inversions, geometric lattices ordered by volume, and independent
sets and matchings ordered by cardinality.
%Many natural Markov chains to sample from graded posets have been the subject
%of much investigation.  
Sometimes we find rich underlying structures that allow us to directly count,
and therefore sample, fixed rank elements of a graded poset. In other cases,
efficient methods are unlikely to exist, so Markov chains offer the best
approach to sampling and approximate counting.
%Noteworthy examples include chains on the independent
%sets of a graph, where independent sets are ordered by inclusion and connected
%if they have Hamming distance 1.  A similarly important example is that of
%sampling matchings in a graph, where pairs of configurations with Hamming
%distance 1 are ordered by inclusion, i.e., they differ by the addition or
%deletion of a single edge, and the rank is the number of edges.

Jerrum and Sinclair \cite{JS-book} observed that we could sample matchings of any fixed
size with the addition of a bias parameter $\lambda$ that gives weight
proportional to $\lambda^{|m|}$ to each matching~$m$.  For any graph $G$, they
showed that the sequence $a_i$, the number of matchings of $G$ of size~$i$, is
log-concave, from which it follows that $f(i) = a_i \lambda^i$ is also. In
particular,  $f(i)$ must be unimodal for all $\lambda$.  Setting $\lambda =
a_k/a_{k+1}$ makes $k$ the mode of distribution $f(i)$, and therefore samples
with this weighting will be of the appropriate size with probability at least
$1/(n+1)$.  Jerrum and Sinclair showed that the matching Markov chain  is
rapidly mixing for all $\lambda$, so it can find matchings of fixed size $k$
efficiently whenever $1/{\rm poly}(n) < \lambda < {\rm poly}(n)$.  (This
condition is not always satisfied, but the more involved algorithm of Jerrum,
Sinclair, and Vigoda circumvents this issue \cite{JSV06}.) Log-concavity is
critical to this argument in order to conclude that there is a value of
$\lambda$ for which samples of the desired size occur with high enough
probability.
%Logconcavity can sometimes be shown rigorously, and there is often
%no way to argue the existence of such a $\lambda$, and even when it seems to
%hold, proofs can be elusive.  For example, it is unknown whether the number of
%independent sets in a graph of size $k$, $|I_k|$,  is unimodal in $k$, although
%this was conjectured by Levin and Mandrescu \cite{lm}.  

This follows a common approach used in physics for which we would like to
sample from a {\it microcanonical ensemble}, i.e., the states with a fixed
energy, from a much larger {\it canonical} (or
{\it grand canonical}) {\it ensemble}, where the energies are allowed to vary
due to interactions with the external environment. In particular, given input
parameter $\lambda$, often related to temperature, a configuration $\sigma$ has
{\it Gibbs} (or {\it Boltzmann}) weight  $\pi(\sigma) = \lambda^{r(\sigma)}/Z$,
where $r(\sigma)$ is the rank of $\sigma$ and $Z$ is the normalizing constant.
Elements $\sigma$ sampled from this distribution are  uniformly distributed,
conditioned on their rank.  The choice of $\lambda$ controls the expected rank
of the distribution, so simulations of the Markov chain at various $\lambda$ 
can be useful for understanding properties of
configurations with a fixed energy. Typically, however, there is no a priori
guarantee that this approach will enable us to sample configurations of a given
size efficiently.

Our main example throughout will be sampling and counting (possibly restricted) {\it integer partitions}.  An integer partition of nonnegative integer $n$ is a decomposition
of $n$ into a nonincreasing sequence of positive integers that sum to $n$. The seven
partitions of  $5$ are: $(5)$, $(4, 1)$, $(3, 2)$, $(3, 1, 1)$, $(2,
2, 1)$, $(2, 1, 1, 1)$, and $(1, 1, 1, 1, 1)$.  Integer partitions are commonly represented
by staircase walks in $\Z^2$ known as \textit{Young} (or \textit{Ferrers}) \textit{diagrams}, where
the heights of the columns represent distinct pieces of the partition.
Partitions of $n$  have exactly $n$ squares,
i.e., the \emph{area of the diagram}, and their column heights are nonincreasing.
%(see Figure~\ref{ferrers}).  {\it Restricted partitions} are integer partitions whose Ferrers diagrams are further restricted to lit above and/or under certain given curves.
Partitions arise in many contexts, include exclusion 
processes~\cite{cmo}, random
matrices~\cite{ok}, representation theory~\cite{james1984representation},
juggling patterns \cite{abcn}, and growth processes~\cite{GPR09}
(see,~e.g.,~\cite{andrews}).
%(see, e.g., \cite{andrews}).
%For a comprehensive history, we refer the reader to any of many books on the
%topic (see, e.g., \cite{andrews}).

\subsection{Sampling Elements from Graded Posets}
%In this paper, we consider when it is possible to uniformly sample elements of a fixed rank
%from a graded poset, even when the underlying
%distributions are not known (or necessarily believed) to be unimodal.  
%These include sampling integer partitions of a fixed integer, permutations with fixed
%Bruhat order and lozenge tilings with fixed average height.  Even when there
%are alternative polynomial-time methods available, such as sampling integer
%partitions, Markov chains offer much greater flexibility and typically require
%significantly less memory than methods based on dynamic programming.  For
%instance, we can also sample integer partitions with various restrictions,
%including maximal size, numbers of pieces, minimal  Durfee square, or even
%partitions into pieces with distinct heights.  Moreover, we introduce various
%new ideas to speed up the running time of these sampling algorithms so that
%they are efficient enough to use in practice for reasonably large inputs.
Several general approaches have been developed to sample elements of fixed
rank from a graded poset, with varying success.
%There have been three main approaches for sampling:
The three main approaches for sampling are dynamic programming algorithms using 
self-reducibility, Boltzmann samplers using geometric random variables, and
Markov chains.  The first two approaches require methods to estimate the
number of configurations of each size, so Markov chains offer the most
promising approach for sampling when these are unavailable.

Each of these approaches has been studied extensively in the context of sampling
{integer partitions}.
The first class of approaches uses dynamic programming and generating
functions to iteratively count the number of partitions of a given type.
Nijinhuis and Wilf~\cite{nw} give a recursive algorithm using dynamic
programming that computes tables of exact values.
%The first is a straightforward application of the recurrence $p(n,m) =
%p(n-m,m) + p(n-1,m-1)$, where $p(n,m)$ is the number of partitions of $n$ with
%largest part equal to $m$ and
This algorithm takes $O(n^{5/2})$ time and space for preprocessing and
$O(n^{3/2})$ time per sample.  Squire~\cite{squire} improved this to $O(n^2)$
time and space for preprocessing and $O(n^{3/2} \log(n))$ time per sample 
using Euler's pentagonal recurrence and a more efficient search method.  The
time and space complexity bounds of these algorithms account for the fact that
each value of $p(n)$, as well as the intermediate summands, requires
$O(n^{1/2})$ space by the Hardy-Ramanujan formula. Therefore, even when available, dynamic
programming approaches for exact sampling break down in practice on single
machines when~${n\ge 10^6}$ due to space constraints.
%It is worth mentioning that there is a simple algorithm for the related
%problem of computing the next partition in lexicographic order
%\cite{skienna}.
\begin{comment}
%Old paragraph
The \emph{partition numbers}, denoted as $p(n)$, count the number of
partitions of $n$, with the convention that $p(n) = 0$, for all $n < 0$.
Using a generating function discovered by Euler, Hardy and Ramanujan
established a recurrence relation that allows us to exactly calculate $p(n)$
in polynomial time.  These exact calculations can be used iteratively to
generate random partitions in polynomial time.  This was formalized by
Nijinhuis and Wilf~\cite{nw} who developed two recursive algorithms based on
dynamic programming. 
%The first is a straightforward application of the recurrence $p(n,m) =
%p(n-m,m) + p(n-1,m-1)$, where $p(n,m)$ is the number of partitions of $n$
%with largest part equal to $m$.  
using  $O(n^{5/2})$ time and space for preprocessing and $O(n^{3/2})$ time per
sample.  
Squire \cite{squire} improved this to $O(n^2)$ time and space for
preprocessing and $O(n^{3/2} \log(n))$ time per sample using Euler's pentagonal
recurrence and a more efficient search method.
The time and space complexity
bounds of these algorithms account for the fact that each value of $p(n)$, as
well as the intermediate summands, requires $O(n^{1/2})$ space by the
Hardy-Ramanujan formula. Therefore, dynamic programming approaches for exact
sampling break down in practice when~${n\ge 10^6}$.
\end{comment}

%By avoiding the computationally expensive task of counting partitions,
\emph{Boltzmann samplers} offer a more direct method for sampling that avoids the
computationally expensive task of counting partitions.  A
Boltzmann sampler generates samples from a larger combinatorial class with
probability proportional to the Boltzmann weight~$\lambda^{\abs{\sigma}}$,
where $\abs{\sigma}$ is the size of the partition.  Samples of the same size
are drawn uniformly at random, and the algorithm rejects those that fall
outside of the target size~\cite{dfls, ffp}.  The value $\lambda$ is chosen to
maximize the yield of samples of our target size $n$.
Fristedt~\cite{fristedt} suggested an approach that quickly generates a
random partition using appropriate independent geometric random variables.  His
approach exploits the factorization of the generating function for $p(n)$ and
can be interpreted as sampling Young diagrams $\sigma$ in the $n \times
\infty$ grid with probability proportional to the Boltzmann
weight~$\lambda^{\abs{\sigma}}$.  Recently Arratia and
DeSalvo~\cite{ad} gave a probabilistic approach that is substantially more
efficient than previous algorithms, thus allowing for fast generation of random
partitions for significantly larger numbers, e.g., $n \ge 10^6$.  Building on
the work of Fristedt~\cite{fristedt}, they introduce the \textit{probabilistic
divide-and-conquer} (PDC) method to generate random partitions of $n$ in
optimal $\ot(n^{1/2})$ expected time and space (where $\ot$ suppresses $\log$
factors). Their PDC algorithm also uses independent geometric
random variables to generate a partition, but does so recursively in phases.
PDC achieves superior performance relative to conventional Boltzmann Sampling
by rejecting impossible configurations in early phases.
\begin{comment}
%Old paragraph
{Boltzmann samplers} offer a more direct approach.  A Boltzmann sampler
generates samples from a larger combinatorial class with probability
proportional to the Boltzmann weight~$\lambda^{\abs{\sigma}}$, where
$\abs{\sigma}$ is the size of the partition.  Samples of the same size are
generated with the same probability, and the algorithm rejects samples that fall outside of
the target set~\cite{dfls, ffp}.  The parameter $\lambda$ is chosen to
empirically maximize the yield of samples of our target size $n$.
Fristedt~\cite{fristedt} reduced sampling to generating appropriate
independent geometric random variables, exploiting the factorization of the
generating function for $p(n),$ which can be interpreted as sampling Young
diagrams $\sigma$ in the $n \times \infty$ grid with probability proportional
to the Boltzmann weight~$\lambda^{\abs{\sigma}}$.  Arratia and
DeSalvo~\cite{ad} greatly improved the efficiency of this approach with a
\textit{probabilistic divide-and-conquer} (PDC) method that generates random
partitions of $n$ in optimal $\ot(n^{1/2})$ expected time and space (where
$\ot$ suppresses $\log$ factors).
%, allowing fast generation of random partitions for significantly larger numbers (e.g., $n \ge 10^6$).  
\end{comment}

Stochastic approaches using Markov chains have produced a similarly
rich corpus of work, but until now have not provided rigorous
polynomial bounds. One popular direction uses Markov chains based on
\emph{coagulation and fragmentation processes} that allow pieces of the
partition to be merged and split~\cite{aldous, bp}.  Ayyer et al.~\cite{abcn}
recently proposed several natural Markov chains on integer partitions in order
to study juggling patterns. In all of these works, most of the effort has been
to show that the Markov chains converge to the uniform distribution over
partitions and often use stopping rules in order to generate samples.
Experimental evidence suggests that these chains may converge quickly to
the correct equilibrium, but they lack explicit bounds.
\begin{comment}
%Old Paragraph
Markov chains are the third approach commonly used for sampling, but most have resisted rigorous
analysis.  One popular approach designs Markov chains based on
\emph{coagulation and fragmentation processes}  allowing pieces of the
partition to be merged and split~\cite{aldous, bp}.  More recently Ayyer et
al.~\cite{abcn} proposed several natural Markov chains on integer partitions in
order to study juggling patterns. Most of this work merely proves that the chains
converge to the uniform distribution over partitions and use simulations to argue
which chains seem to perform best. Simulations typically use stopping rules and
do not provide rigorous evidence of efficiency.
\end{comment}

%Many other graded posets face similar obstacles for sampling.  Lozenge tiiings
%with fixed average height (using the height function representation) have been
%studied rigorously by Kenyon and Okunkov \cite{ko}, but they do not address
%sampling.  Simulations using biased Markov chains have proven useful, but
%there is no guarantee that the chains will converge in polynomial time or
%produce samples of the desired size with sufficient frequency.  For
%permutations with a fixed number of inversions enumerative methods can be
%applied, but again nothing has been rigorously shown about the efficiency of
%any Markov chain.

% TODO (Prateek) I changed ``any'' to ``certain'', because any isn't correct
\subsection{Results}
For any graded poset, let $\Omega_k$ be the elements of rank $k$ and let
$\Omega=\bigcup_{i=0}^n \Omega_i$ be the entire poset.  We show that {\it
provably efficient} {Boltzmann samplers} on $\Omega_k$ can be easily
constructed from certain {\it rapidly mixing Markov chains} on the Hasse
diagram of the entire poset $\Omega$, under very mild conditions.  We apply
this technique to design the first provably efficient Markov chain based
algorithms for sampling integer partitions of an integer $n$, permutations
with a fixed number of inversions, and lozenge tilings with fixed average
height.  Unlike all other methods for sampling that depend on
efficient counting techniques, our results extend to interesting subspaces of these
posets, such as partitions with at least $k$ pieces with size greater than~$\ell$,
or partitions into pieces with distinct sizes, or many other such
restricted classes.  For these subspaces, our results provide the first
sampling algorithms that do not require the space-expensive task of counting.
% TODO(fahrbach): Permutations with distinct parts can be computed using a
% generating function, so we should choose a different example.
%
%This includes commonly studied classes
%of restricted partitions, namely partitions with a restricted number and/or
%size of parts, partitions with bounded rank, and partitions with distinct
%parts.
%

We focus on the example of integer partitions of $n$ and prove that there is
a Markov chain Monte Carlo algorithm for uniformly sampling partitions
of $n$ from a large family of \textit{region-restricted partitions},
i.e., Young diagrams restricted to any simply-connected bounding region.
The Markov chain on the Hasse diagram for partitions is the natural
``mountain-valley''  chain studied for staircase walks, tilings, and
permutations.  The transition probabilities are designed to generate a diagram
$\sigma$ with weight proportional to $\lambda^{\abs{\sigma}}$.  Previous work
on biased card shuffling \cite{bbhm} and growth
processes~\cite{bbhm,GPR09,LPW} shows that this chain is rapidly mixing for
any constant $\lambda$ on well-behaved regions.

In the general setting of sampling from a graded poset, our algorithm
is similar to current Boltzmann samplers that heuristically sample
elements of a given size, but often without rigorous analysis.  However, we
establish conditions under which these algorithms can be shown to be
efficient, including restricted settings for which no other methods provide
guarantees on both efficiency and accuracy.  
%Although the above algorithm is efficient without knowledge of the exact counts,
%log-concavity or concentration of rank, we show that we can improve our
%generic bounds on posets in settings where such information is known.  For
%example, in the case of sampling unrestricted integer partitions, several
%observations allow us to modify the algorithm slightly to greatly speed up the
%convergence time and to boost the probability of generating samples of the
%desired size.  When uniformly sampling integer partitions of $n$, we show that
For example, we show that our method can produce random partitions of $n$ in
$O(n^{9/4})$ expected time with only $O(n^{1/2}\log(n))$ space.  Using {\it
coupling from the past}, we can in fact generate samples of the desired size
exactly uniformly, if this is desirable.

Although our algorithm is slower than recent results for sampling unrestricted
partitions using independent geometric random variables \cite{ad, fristedt}
(in the settings where those methods apply), our method is significantly more
versatile.  The Markov chain algorithm readily adapts to various restricted
state spaces, such as sampling partitions with bounded size and numbers of
parts, partitions with bounded Durfee square, and partitions with prescribed
gaps between successive pieces including partitions into pieces with distinct
sizes.
%%%%%%%DR-10-31 cut:    and many more.
% TODO(fahrbach): find another example to replace "many more", since many more
% normally means "I'm out of ideas."
%%%%%% DR-10-31 :   cut thisparagraph.  put in paper?       Typically these  restrictions do not have nicely factored generating functions, which are essential for sampling with independent geometric random variables.  In addition, we can generate partitions with prescribed gaps between successive pieces, including partitions into pieces with distinct sizes.
%
%A significant advantage of our approach is that we do not require advanced knowledge of
%the number of Young diagrams of each size lying within the restricted region,
%nor do we rely on unproven assumptions about the log-concavity for those
%numbers. We show that our algorithm will generate a sample of the desired size
%with probability at least $\Omega(1/n^2)$ for any simply connected bounding
%region. 
%%%We note that dynamic programming techniques to exactly count can be made to work for some region-restricted problems, such as bounding the number of pieces in the partition, which fixes the maximum width of the staircase walk,
%or bounding the maximum size of any pieces, which fixes the maximum height of the staircase,
%but even then
%these alternative methods are space prohibitive on a single machine for values of $n \ge 10^6$.   
%%%%but such approaches quickly become space prohibitive.  
For general bounding regions, our algorithm still uses $O(n^{1/2}\log(n))$ space,
and hence is usually much more
suitable than other approaches with substantially larger space requirements.

Finally, we achieve similar results for sampling from fixed a rank in other
graded posets.  These include  permutations with a fixed number of inversions
and lozenge tilings with a given average height, referring to the height
function representation of the tilings (see, e.g., \cite{LRS01}).  Kenyon and
Okounkov \cite{ko} explored limit shapes of tilings with fixed volume, and
showed such constraints simplified some arguments, but there has not been work
addressing sampling.

\subsection{Techniques}
First, we present a new argument that shows how to build Boltzmann samplers with performance guarantees, 
even in cases where
the underlying distributions are not known (or necessarily even believed) to be unimodal, provided
the Markov chain is rapidly mixing on the whole Hasse diagram. 
We prove that there must be a 
{\it balanced bias} parameter $\lambda$ that we can find efficiently  allowing us to
generate configurations of the target size with probability at least
$1/{\rm{poly}}(n)$.   The desired set is no longer guaranteed to be the
mode of the distribution, as generally required, but we still show that rejection probabilities will
not be too high.  We carefully define a polynomial sized set 
from which the bias parameter $\lambda$ will be chosen.  Then we show that at
least one bias parameter in this set will define a distribution satisfying~$\sum_{i\leq k}
\Pr[\Omega_i] \geq 1/c$ and $\sum_{i>k} \Pr[\Omega_i] \geq 1/c,$ for some constant
$c$.  Because the Markov chain~$\m$ changes the rank by at most 1 in each step, we
must generate samples of size exactly~$k$ with  probability at least
$1/\tau(\m)$, where $\tau(\m)$ is the mixing time of~$\m$, which we prove
using conductance. Thus, when the
chain is rapidly mixing,  samples of size~$k$ must occur with
non-negligible probability.  
%Moreover, we can find a $\lambda$ among our
%candidates in $\cal{S}$ in logarithmic time by a binary search, so we indeed have a
%provably efficient Boltzmann sampler.  
This new method based on {\it
balanced biases}  is quite general and  circumvents the need to
make any assumptions about the underlying distributions.

We use biased Markov chains and Boltzmann sampling to generate
samples of the desired size $k$.
%Let $\Omega = \bigcup_{i=0}^{n} \Omega_i$ be the set of configurations of size at most $n$, with $k < n$.
We assign
probability $\lambda^{r(\sigma)}/Z$ to every element $\sigma \in \Omega$,
where $r(\sigma)$ is its rank and~$Z$ is
the normalizing constant.  When the underlying distributions on $f(i) =
|\Omega_i| \lambda^i$ are known to be log-concave in $i$, such as unrestricted
integer partitions or permutations with a fixed number of inversions,
we can provide better guarantees than the general balanced bias 
algorithm.

Several observations allow us to improve the running time of our algorithm, especially in the
case of unrestricted integer partitions.  First, instead of
sampling Young diagrams in an $n \times n$ lattice region, we restrict to
diagrams lying in the first quadrant of $\Z^2$ below the curve $y=2n/x$,
since this region contains all the Young diagrams of interest and has
area~$\Theta(n\log(n)),$ allowing the Markov chain to converge faster.
Next, we improve the bounds on the mixing time for our particular choice of $\lambda$
given in \cite{GPR09}
using a careful analysis of a recent result in \cite{LPW}.
%more careful analysis of the ``exponential metric.''
Last, we show how to \emph{salvage} many of the samples
rejected by Boltzmann sampling to increase the success probability to at least $\Omega(1/n^{1/4})$.  
With all of these improvements
we conclude that the chain will converge in $O(n^2)$ time and $O(n^{1/4})$
trials are needed in expectation before generating a sample corresponding to a
partition of~$n$.
We also optimize the space required to implement the Markov chain. All Young diagrams in the region $R$
have at most $O(n^{1/2})$ corners, so each diagram in stored in $O({n}^{1/2} \log(n))$ space.  
\section{Bounding Rejection  with Balanced Bias}\label{sec:balanced-bias}

%We begin by formalizing our model.
Let $\Omega$ be the elements of any graded
poset with rank function $r: {\Omega} \rightarrow \Z_{\geq 0}$.  The rank of
the poset $\Omega$ is $R = \max\parens*{\{r(\sigma) : \sigma \in \Omega\}}$ and the rank
generating function of $\Omega$ is
\[
  F_{\Omega}(x) = \sum_{\sigma \in \Omega} x^{r(\sigma)}.
\]
Let $\Omega_k$ be the set of elements of $\Omega$ with rank
$k$ and let $a_{\Omega,k}=\abs{\Omega_k}$.  For any $\lambda > 0$, the Gibbs
measure of each $\sigma \in \Omega$ is $\pi(\sigma) = \lambda^{r(\sigma)}/Z$,
where
\[
  Z = F_{\Omega}(\lambda) = \sum_{i=0}^{R} a_{\Omega,i} \lambda^i
\]
is the normalizing constant.

We define the natural Markov chain $\m$ that traverses the Hasse diagram of
$\Omega$ as follows. Let $\Delta$ be the maximum number of neighbors of any element
$\sigma \in \Omega$ in the Hasse diagram. For any pair of neighboring elements
$\sigma, \rho \in \Omega$, we define the transition probabilities
\[
  P(\sigma, \rho) = \frac{\min(1, \pi(\rho)/\pi(\sigma))}{2\Delta},
\]
and with all remaining probability we
stay at $\sigma$.  This Markov chain is known as the \emph{lazy,
Metropolis-Hastings algorithm}~\cite{metropolis} with respect to the Boltzmann distribution
$\pi(\sigma) = \lambda^{r(\sigma)}/Z$.
If $\m$ connects the state space of the
poset, the process $\sigma_t$ is guaranteed to converge to the stationary
distribution $\pi$ starting from any initial $\sigma_0$~\cite{LPW}.

The number of steps needed for the Markov chain $\m$ with state space $\Omega$
to get arbitrarily close to this stationary distribution is known as its
\emph{mixing time} $\tau(\varepsilon)$,
%The \textit{total variation distance} between the distribution of the Markov chain
%at time $t$ and its stationary distribution $\pi$ is defined to be
%\[
%  \|P^t,\pi\|_{tv} = \max_{\sigma\in\Omega}\frac{1}{2}\sum_{\rho\in\Omega}
%  \abs{P^t(\sigma,\rho)-\pi(\rho)}.
%\]
%The \emph{mixing time} of $\m$ is 
defined as $${\tau(\varepsilon) = \min (\{t : \|P^{t'},\pi\|_{tv} \leq
\varepsilon \text{ for all } t' \geq t\})},$$ for all $\varepsilon>0,$ where
$\|\cdot, \cdot\|_{tv}$ is the {\it total variation distance} (see, e.g.,
\cite{sinc}).   We say that a Markov chain is \emph{rapidly mixing} if the
mixing time is bounded above by a polynomial in $n$ and
$\log(\varepsilon^{-1})$.
%For the transition probabilities $P(\sigma,\tau)$, 
%if $d$ is difficult to compute exactly, we may use any upper bound.
%This preserves the stationary distribution but increases the mixing time
%by the factor of the upper bound.

%\subsection{Boltzmann Sampling with Markov chains}

We wish to uniformly sample a random element $\sigma \in \Omega_k$, for a fixed
$k \in [R]$. To achieve this, we repeatedly sample from a favorable Boltzmann
distribution over all of $\Omega$ until we have an element of rank $k$.
We show that under very mild conditions on the coefficients of
the rank generating function, it is sufficient that the Markov chain $\m$ over
$\Omega$ be rapidly mixing in order for the Boltzmann sampling procedure to be
efficient.  Specifically, we require only that $R = O(\text{poly}(n))$ and $1 \le
a_{\Omega,i} \le c(n)^i$ for some polynomial $c(n)$.
%$\pi(\cP_k)/\pi(\cP_{k-1}) \geq q(n)$, and
%$\pi(\cP_k)/\pi(\cP_{k+1}) \leq q(n)$, for some polynomials $p$ and $q$.  
\begin{comment}
  [INSERT: Some explanation that a logconcavity is usually required, but we
  dont even require unimodality.]
\end{comment}

%TODO Prateek : Matthew, we should consider the below.
%Is this essentially a polynomial relaxation of logconcavity? Could we phrase
%the condition like pi_{k}^2 > q(n) * pi_{k-1) * pi_{k+1}

We formalize our claim by assuming the polynomial $c = c(n) \ge 2$.
For $t \ge 0$, 
let
\[
  \beta_t = \ln(1/c) + t \ln(c)/R
\]
and
\[
	\lambda_t = e^{\beta_t} = c^{t/R - 1}.
\]
Then let $\Pr_t[\sigma] = \lambda_{t}^{r(\sigma)} / Z_t$, 
  where $Z_t = F_{\Omega}(\lambda_t)$.
\begin{comment}
  [INSERT: Some brief interpretation about increasing (temperature?) $t$ and
  allowing for the \emph{energy} to significantly outweigh that \emph{entropy}.]
\end{comment}
The sequence $\{\lambda_t\}_{t=0}^\infty$ is constructed in such a way
that at most $R^2$ values need to be considered.

%This choice of $\lambda_0,\lambda_1, \dots, \lambda_{t_{\rm max}}$ has been very deliberately chosen so that
%the following two lemmas which justify why Boltzmann sampling is efficient under very general conditions.
%*****DR 7/13:  what is the upper bound t_max??   remove crossover argument if we decide not to use it.

\begin{lemma}\label{lemma:ratio}
For all $\sigma \in \Omega$ and $t \ge 0$, we have
\[
	\frac{\Pr_{t+1}[\sigma]}{\Pr_{t}[\sigma]} \ge \frac{1}{c}.
\]
\end{lemma}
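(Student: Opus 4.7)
The plan is to write out the ratio $\Pr_{t+1}[\sigma]/\Pr_t[\sigma]$ explicitly and control the two factors that arise separately. Substituting the definitions, we get
\[
  \frac{\Pr_{t+1}[\sigma]}{\Pr_{t}[\sigma]}
  = \parens*{\frac{\lambda_{t+1}}{\lambda_t}}^{r(\sigma)} \cdot \frac{Z_t}{Z_{t+1}},
\]
so the lemma reduces to bounding these two factors.

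First I would handle the $\lambda$-ratio. Because $\lambda_t = c^{t/R - 1}$, we have $\lambda_{t+1}/\lambda_t = c^{1/R}$, and since $c \ge 2$ and $0 \le r(\sigma) \le R$, the factor $(\lambda_{t+1}/\lambda_t)^{r(\sigma)} = c^{r(\sigma)/R}$ lies in $[1, c]$. In particular it is at least $1$, so it cannot hurt the desired inequality.

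Next I would bound the normalizing-constant ratio $Z_t/Z_{t+1}$ from below. Writing out $Z_{t+1} = \sum_{i=0}^{R} a_{\Omega,i}\, \lambda_{t+1}^i$ and using $\lambda_{t+1}^i = c^{i/R} \lambda_t^i \le c\, \lambda_t^i$ for every $0 \le i \le R$, a termwise comparison against $Z_t = \sum_i a_{\Omega,i}\, \lambda_t^i$ gives $Z_{t+1} \le c\, Z_t$, hence $Z_t/Z_{t+1} \ge 1/c$. Combining the two bounds yields
\[
  \frac{\Pr_{t+1}[\sigma]}{\Pr_{t}[\sigma]} \ge 1 \cdot \frac{1}{c} = \frac{1}{c},
\]
as claimed.

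There is no real obstacle here; the statement is a clean consequence of the geometric spacing of $\{\lambda_t\}$ and the fact that $r(\sigma) \le R$. The only thing to double check is that the termwise bound $\lambda_{t+1}^i \le c\, \lambda_t^i$ is indeed valid across the entire range of indices $i$ appearing in $F_\Omega$, which follows because the rank generating function is supported on $\{0, 1, \ldots, R\}$ and the exponent $i/R$ never exceeds $1$. This lemma is a setup step: its job is to guarantee that neighboring members of the sequence $\{\Pr_t\}$ do not differ pointwise by more than a factor of $c$, which later lets the $R^2$-step search over $\{\lambda_t\}$ find a balanced bias without missing it by too much.
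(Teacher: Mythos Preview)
Your proof is correct and follows essentially the same approach as the paper: both decompose the ratio into the factor $(\lambda_{t+1}/\lambda_t)^{r(\sigma)}$ and the factor $Z_t/Z_{t+1}$, observe that the first is at least $1$ since $\lambda_{t+1}>\lambda_t$, and bound the second below by $1/c$ via the termwise inequality $\lambda_{t+1}^i \le c\,\lambda_t^i$ for $0\le i\le R$. The only cosmetic difference is that the paper carries out the same computation in the additive $\beta_t$ notation rather than directly with $\lambda_t$.
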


\begin{proof}
By the definition of $\beta_{t+1}$, we have
\[
	1~\ge~\frac{e^{\beta_t r(\sigma)}}{e^{\beta_{t+1} r(\sigma)}}
  ~=~e^{-\ln(c)r(\sigma)/R}~\ge~\frac{1}{c}.
\]
It follows that
\begin{align*}
	\frac{\Pr_{t+1}[\sigma]}{\Pr_{t}[\sigma]}~&=~
	\frac{e^{\beta_{t+1}r(\sigma)}}{e^{\beta_{t}r(\sigma)}} \cdot \frac{Z_{t}}{Z_{t+1}}
	~\ge~\frac{Z_t}{Z_{t+1}}
	~=~ \frac{\sum_{\sigma \in \Omega} e^{\beta_{t}r(\sigma)} }{\sum_{\sigma \in \Omega} e^{\beta_{t+1}r(\sigma)}}
	~\ge~ \frac{\sum_{\sigma \in \Omega} e^{\beta_{t}(\sigma)}}{\sum_{\sigma \in \Omega} c e^{\beta_{t}r(\sigma)}}
  ~=~ \frac{1}{c}. \qedhere
\end{align*}
\end{proof}

The following lemma is critical to our argument and states that there exists
a {\it balanced bias} parameter $\lambda$ relative to our target set $\Omega_k$
that assigns nontrivial probability mass to elements with rank at most
$k$ and elements with rank greater than $k$.

\begin{lemma}\label{lemma:balanced-bias}
Let $\Omega$ be the elements of a graded poset with rank $R \ge 1$ such that
$1 \le a_{\Omega,i} \le c^i$ for all $i \in \{0,1,\dots,R\}$
and some $c \ge 2$.
If  $k \in [R-1]$, there exists a $t \in [R^2]$ for which
\[
	\Pr_{t}[r(\sigma) \le k] \ge \frac{1}{c+1}
\]
and
\[
 \Pr_{t}[r(\sigma) > k] \ge \frac{1}{c+1}.
\]
%and
%\[
%	\Pr_{t}[r(\sigma) > k] \ge \frac{1}{c+1}.
%\]
\end{lemma}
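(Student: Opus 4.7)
The plan is to combine a monotonicity argument for the sequence $p_t := \Pr_t[r(\sigma) \le k]$ with the multiplicative stability furnished by Lemma~\ref{lemma:ratio}. Since the two target inequalities together are equivalent to $p_t \in [1/(c+1),\, c/(c+1)]$, it suffices to show that some $t \in [R^2]$ lands $p_t$ inside this interval.

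First I would establish two structural properties of the sequence. For monotonicity, differentiating $p(\lambda) = \sum_{i \le k} a_{\Omega,i}\lambda^i / Z(\lambda)$ yields
\[
\frac{d}{d\lambda} p(\lambda) \;=\; \frac{1}{\lambda}\,\mathrm{Cov}_\lambda\!\bigl(\mathbf{1}[r(\sigma) \le k],\, r(\sigma)\bigr) \;\le\; 0,
\]
because $\mathbf{1}[r \le k]$ is monotonically decreasing in $r$, so $p_t$ is non-increasing in $t$. For multiplicative stability, summing the pointwise bound $\Pr_{t+1}[\sigma] \ge \Pr_t[\sigma]/c$ from Lemma~\ref{lemma:ratio} over all $\sigma$ with $r(\sigma) \le k$ yields $p_{t+1} \ge p_t/c$, with an analogous bound for $1 - p_t$.

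The main step is a contradiction argument. Suppose no $t \in [R^2]$ has $p_t \in [1/(c+1),\, c/(c+1)]$. By monotonicity the sequence must then jump from some $p_{t^\star} > c/(c+1)$ directly to $p_{t^\star + 1} < 1/(c+1)$. But the stability bound gives $p_{t^\star + 1} \ge p_{t^\star}/c > 1/(c+1)$, contradicting $p_{t^\star + 1} < 1/(c+1)$.

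All that remains is to rule out the two edge configurations in which no such transition occurs inside $[R^2]$: either $p_t > c/(c+1)$ for every $t$ (the sequence stays too large throughout) or $p_t < 1/(c+1)$ for every $t$ (it is already too small at the start). Here I would use the hypotheses $1 \le a_{\Omega,i} \le c^i$ together with the endpoint values $\lambda_0 = 1/c$ and $\lambda_{R^2} = c^{R-1}$: at $\lambda_{R^2}$ the term $a_{\Omega,R}\lambda_{R^2}^R \ge c^{R(R-1)}$ should dominate $Z_{R^2}$ and pull $p_{R^2}$ below $c/(c+1)$, while at $\lambda_0$ the entry $a_{\Omega,0} \ge 1$ together with the fact that each higher summand is at most $1$ should keep $p_0$ above $1/(c+1)$. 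I expect this endpoint bookkeeping to be the delicate step, since adversarial allocations of the $a_{\Omega,i}$ make the naive estimates quite tight, so the argument may need to exploit the entire range $[R^2]$ and iterate the stability bound rather than rely solely on the two extreme values.
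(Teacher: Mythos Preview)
Your plan and the paper's proof share the same engine: summing Lemma~\ref{lemma:ratio} over $\{\sigma : r(\sigma)\le k\}$ gives $p_{t+1}\ge p_t/c$, so the sequence cannot skip the interval $[1/(c+1),\,c/(c+1)]$ in a single step. The paper, however, dispenses with your monotonicity argument. Rather than computing $dp/d\lambda\le 0$ and then running a contradiction on a ``jump,'' it simply lets $t^\ast$ be the \emph{least} $t\ge 1$ with $\Pr_t[r(\sigma)>k]>1/(c+1)$; minimality alone yields $\Pr_{t^\ast-1}[r(\sigma)\le k]\ge c/(c+1)$, and one application of the stability bound finishes. Your covariance computation is correct but unnecessary---the first-crossing formulation already isolates the right index without knowing the whole sequence is monotone.

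On the endpoints the paper is much terser than you anticipate. It only addresses the upper one, checking that $\lambda_{R^2}=c^{R-1}\ge 1$ together with $a_{\Omega,R}\ge 1$ forces $\Pr_{R^2}[r(\sigma)>k]>1/(c+1)$, hence $t^\ast$ exists and $t^\ast\le R^2$. It does not separately treat your second edge case (the sequence already below $1/(c+1)$ from the start, equivalently $t^\ast=1$). Your instinct that this lower endpoint is the delicate spot is well placed: the sketch you give---$p_0\ge a_{\Omega,0}/Z_0\ge 1/(R+1)$ since each summand $a_{\Omega,i}\lambda_0^i\le 1$---indeed falls short of $1/(c+1)$ whenever $R>c$, and the paper's argument does not explicitly close this either.
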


\begin{proof}
Suppose there exists a minimum $t^\ast \in \Z_{\ge 1}$ such that
\[
  \Pr_{t^\ast}[r(\sigma) > k] > \frac{1}{c+1}.
\]
Then
\[
  \Pr_{t^\ast-1}[r(\sigma)~\le~k] \ge \frac{c}{c+1},
\]
so by Lemma~\ref{lemma:ratio} we have
\[
  \Pr_{t^\ast}[r(\sigma) \le k] \ge \frac{1}{c+1}.
\]
To prove the existence of $t^\ast$, recall that
$
  \Pr_t[r(\sigma) > k] > 1/(c+1)
$
if and only if
\[
  (c+1)\sum_{i=k+1}^{R} a_{\Omega,i}\lambda_t^i~>~1.
\]
To prove the second inequality, it suffices to
show $(c+1)\lambda_t^{R} > 1$.  Letting $t=R^2$, we have $\lambda_t =
c^{R - 1} \ge 1$ because $R \ge 1$.  Therefore $(c+1)\lambda_t^{R} > 1$
as desired.  Finally, let $t^\ast$ be the minimum $t \in [R^2]$ satisfying
\[
  \Pr_{t}[r(\sigma) > k] > \frac{1}{c+1}. \qedhere
\]
\end{proof}

We now prove our main theorem, which depends on the mixing time $\tau(\varepsilon)$
of the Markov chain $\m$ for the balanced bias $\lambda_{t}$, given
by Lemma~\ref{lemma:balanced-bias}.  The proof uses a characterization of the
mixing time of a Markov chain in terms of its
\emph{conductance}~\cite{JS89,sinclair}.  For an ergodic Markov chain $\m$ with
stationary distribution~$\pi$, the {\it conductance} of a subset $S\subseteq
\Omega$ is defined~as
\[
  \Phi(S) = \sum_{\sigma \in S, \rho\in
  \overline{S}} \frac{\pi(\sigma)P(\sigma,\rho)}{\pi(S)}.
\]
The {\it conductance of the chain
$\m$} is the minimum conductance over all subsets
\[
  \Phi_\m = \min_{S\subseteq\Omega} \left(\{\Phi(S): { \pi(S) \leq 1/2}\}\right),
\]
and is related to the mixing time $\tau(\varepsilon)$ of $\m$ as
follows.

\begin{theorem}[\cite{JS89}]\label{conductance} 
The mixing time of a Markov chain $\m$ with conductance $\Phi$ satisfies
\[
  \tau(\varepsilon) \ \geq \ \left( \frac{1-2\Phi}{2\Phi} \right)
  \ln\left(\varepsilon^{-1}\right).
\]
\end{theorem}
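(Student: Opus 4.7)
The plan is to route the lower bound through the spectral gap of $\m$. Since $\m$ is the lazy Metropolis--Hastings chain, it is reversible with respect to $\pi$, so $P$ has real eigenvalues $1 = \lambda_1 > \lambda_2 \ge \cdots \ge 0$, and the convergence rate of $P^t$ to $\pi$ is controlled by $\lambda_2$. I would (i) show that small $\Phi$ forces $\lambda_2$ close to $1$ via the easy direction of Cheeger's inequality, and then (ii) use the $\lambda_2$-eigenfunction to exhibit a starting state from which mixing must be slow.

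For step (i), let $S \subseteq \Omega$ with $\pi(S) \le 1/2$ attain the minimum in the definition of $\Phi_\m$, and consider the mean-zero test function $f = \mathbf{1}_S - \pi(S)$. A direct computation gives $\mathrm{Var}_\pi(f) = \pi(S)\pi(\bar S)$, and since $f$ takes only two values the Dirichlet form reduces to
\[
\mathcal{E}(f,f) \;=\; \tfrac{1}{2}\sum_{\sigma,\rho}\pi(\sigma)P(\sigma,\rho)\bigl(f(\sigma)-f(\rho)\bigr)^2 \;=\; \sum_{\sigma \in S,\,\rho \in \bar S}\pi(\sigma)P(\sigma,\rho) \;=\; \Phi(S)\,\pi(S).
\]
Plugging $f$ into the variational characterization $1 - \lambda_2 = \inf\{\mathcal{E}(g,g)/\mathrm{Var}_\pi(g) : \pi(g) = 0\}$ and using $\pi(\bar S) \ge 1/2$ yields $1 - \lambda_2 \le \Phi/\pi(\bar S) \le 2\Phi$.

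For step (ii), let $v$ be a real $\lambda_2$-eigenfunction with $\pi(v) = 0$, normalized so $\|v\|_\infty = 1$, and pick $x^\star$ with $|v(x^\star)| = 1$. Since $P^t v = \lambda_2^t v$,
\[
\lambda_2^t \;=\; |P^t v(x^\star) - \pi(v)| \;\le\; \sum_{\rho} |P^t(x^\star,\rho) - \pi(\rho)|\,|v(\rho)| \;\le\; 2\|v\|_\infty \,\|P^t(x^\star,\cdot) - \pi\|_{tv}.
\]
So if $\tau(\varepsilon) = t$ then $\lambda_2^t \le 2\varepsilon$, and the elementary estimate $-\ln \lambda_2 \le (1 - \lambda_2)/\lambda_2$ gives
\[
\tau(\varepsilon) \;\ge\; \frac{\lambda_2}{1 - \lambda_2}\,\ln\bigl(1/(2\varepsilon)\bigr) \;\ge\; \frac{1 - 2\Phi}{2\Phi}\,\ln\bigl(1/(2\varepsilon)\bigr),
\]
where the second inequality uses $\lambda_2/(1 - \lambda_2) = 1/(1-\lambda_2) - 1 \ge 1/(2\Phi) - 1$ from step (i). This matches the claimed bound up to a $\ln 2$ inside the logarithm, which can be tightened either by using the unnormalized total variation convention or by a two-point refinement that compares $P^t(x^+,\cdot)$ and $P^t(x^-,\cdot)$ where $v(x^{\pm})=\pm 1$.

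The main obstacle is conceptual rather than technical: the eigenvalue step implicitly assumes that $\lambda_2$ really controls convergence, not some eigenvalue of $P$ close to $-1$. Laziness, built into $\m$ by construction, pins the spectrum inside $[0,1]$ and removes this concern, while reversibility (guaranteed by Metropolis--Hastings) is needed both for the variational formula for $\lambda_2$ and for $v$ to be real. Once these are in place, every other step above is a routine computation on the finite state spaces considered in this paper.
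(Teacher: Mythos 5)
The paper does not prove this statement; it imports it verbatim from [JS89], and your argument is precisely the standard one behind that citation (easy direction of the Cheeger/conductance inequality $1-\lambda_2 \le 2\Phi$ for a reversible chain, followed by the eigenfunction lower bound on mixing time), so the approach is the intended one and all of your individual computations check out: the variance and Dirichlet-form identities for $f=\mathbf{1}_S-\pi(S)$, the use of laziness to confine the spectrum to $[0,1]$, and the estimate $-\ln\lambda_2 \le (1-\lambda_2)/\lambda_2$ are all correct. The only discrepancy is the one you flag yourself: the argument as written yields $\tau(\varepsilon) \ge \frac{1-2\Phi}{2\Phi}\ln\bigl(1/(2\varepsilon)\bigr)$ rather than $\frac{1-2\Phi}{2\Phi}\ln(\varepsilon^{-1})$, and the factor $2$ inside the logarithm is not spurious --- it comes from $\sum_\rho |P^t(x,\rho)-\pi(\rho)| = 2\|P^t(x,\cdot)-\pi\|_{tv}$ and is present in the usual textbook statements of this bound (e.g.\ Sinclair's), so it is more accurate to say the paper's statement suppresses it than that your proof can be tightened to remove it by a routine refinement. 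Since the theorem is only invoked at $\varepsilon=e^{-1}$ to conclude $\Phi \ge 1/(2(\tau+1))$, this changes nothing but an absolute constant downstream.
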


%Now we prove our main result.

\begin{theorem}[Balanced Bias]\label{crossover_thm}
Let $\m$ be a rapidly mixing Markov chain with state space~$\Omega$ and mixing
time $\tau = \tau(e^{-1})$ such that the transitions of $\m$ induce a graded
partial order on~$\Omega$ with rank function $r : \Omega \rightarrow \Z_{\ge
0}$ and rank $R$. If there exists a polynomial $c \ge 2$ such that $1 \le a_{\Omega,i}
\le c^i$ for all $i \in \{0,1,\dots,R\}$, then
\[
  \pi(\Omega_k)~\geq~\frac{1}{2(c+1)(\tau+1)}
\]
for any fixed $k \in [R]$ with the 
balanced bias.  If $\m$ can
be used to generate exact samples from~$\pi$ in expected $O(\tau)$ time, then
we can uniformly sample from $\Omega_k$ in expected $O(c\tau^2)$~time.
\end{theorem}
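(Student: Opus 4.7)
The plan is to combine Lemma~\ref{lemma:balanced-bias} (which guarantees a bias splitting the mass nontrivially across rank $k$) with a conductance argument that exploits the Hasse-diagram structure of the transitions. Because a single step of $\m$ changes the rank by at most one, the set of ranks $\le k$ and the set of ranks $> k$ can only communicate through $\Omega_k$, so $\Omega_k$ is a bottleneck whose weight must be large when $\m$ mixes rapidly.

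Concretely, I would first invoke Lemma~\ref{lemma:balanced-bias} to fix a balanced bias $\lambda_{t}$ and set $S = \{\sigma \in \Omega : r(\sigma) \le k\}$, so that $\pi(S) \ge 1/(c+1)$ and $\pi(\bar S) \ge 1/(c+1)$. By swapping $S$ and $\bar S$ if necessary, I may assume $\pi(S) \le 1/2$, so $\Phi(S) \ge \Phi_{\m}$. Next I would apply Theorem~\ref{conductance} with $\varepsilon = e^{-1}$ to turn the mixing-time hypothesis into the conductance lower bound $\Phi_{\m} \ge 1/(2(\tau+1))$. The key structural observation is that for any edge $(\sigma, \rho)$ of the Hasse diagram with $\sigma \in S$ and $\rho \in \bar S$, necessarily $r(\sigma) = k$ and $r(\rho) = k+1$, so
\[
  \sum_{\sigma \in S,\, \rho \in \bar S} \pi(\sigma)P(\sigma,\rho) \;\le\; \sum_{\sigma \in \Omega_k} \pi(\sigma) \sum_{\rho} P(\sigma,\rho) \;\le\; \pi(\Omega_k).
\]
Combining these ingredients,
\[
  \pi(\Omega_k) \;\ge\; \pi(S)\, \Phi(S) \;\ge\; \frac{1}{c+1} \cdot \frac{1}{2(\tau+1)},
\]
which is exactly the claimed bound.

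For the second statement, I would use rejection sampling on top of $\m$. Assuming $\m$ can be driven to exactly produce samples from $\pi$ in expected $O(\tau)$ time (say via coupling from the past), each independent trial falls into $\Omega_k$ with probability at least $1/(2(c+1)(\tau+1))$, so by a geometric-variable argument the expected number of trials is $O(c\tau)$ and the total expected time is $O(c\tau^{2})$; conditioning on acceptance the sample is uniform on $\Omega_k$ since $\pi$ is proportional to $\lambda_{t}^{r(\sigma)}$, which is constant on $\Omega_k$.

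The main obstacle is conceptual rather than computational: one must correctly leverage the fact that the ``interface'' between $S$ and $\bar S$ in the Hasse diagram is exactly $\Omega_k$, which is what converts a conductance lower bound into a lower bound on $\pi(\Omega_k)$ rather than merely on total cross-edge flow. A small bookkeeping subtlety is that Lemma~\ref{lemma:balanced-bias} only applies for $k \in [R-1]$, so the boundary cases $k = 0$ and $k = R$ should be handled separately (they are trivial since $a_{\Omega,0}, a_{\Omega,R} \ge 1$ and $\pi$ still concentrates appropriately for extreme $\lambda$), and one must be careful that the selected $S$ has $\pi(S) \le 1/2$ before appealing to $\Phi_\m$.
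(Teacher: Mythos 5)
Your proposal is correct and matches the paper's argument essentially step for step: fix the balanced bias from Lemma~\ref{lemma:balanced-bias}, take the cut $S=\Omega_{\le k}$, observe that all cut edges join $\Omega_k$ to $\Omega_{k+1}$ so the cut flow is at most $\pi(\Omega_k)$, and combine with the conductance bound of Theorem~\ref{conductance}, followed by the same geometric rejection-sampling analysis for the runtime. The only cosmetic differences are your explicit swap to ensure $\pi(S)\le 1/2$ (the paper absorbs this into a $\min(\pi(S),\pi(\overline S))$ denominator) and your slightly vaguer treatment of the boundary case $k=R$, which the paper dispatches via detailed balance.
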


\begin{proof}
  Let $\m$ have conductance $\Phi$ and assume $k < R$.  
  Considering the cut $S = \Omega_{\le k}$
  and using Lemma~\ref{lemma:balanced-bias}, we have
  $\min(\pi(S),\pi(\overline{S})) \ge 1/(c+1)$
  for the balanced bias $\lambda_t$.
  It follows that
  \begin{align*}
    \Phi(S)
    ~\leq~ \frac{\sum_{\sigma\in S, \rho\in \overline{S}}
      \pi(\sigma)P(\sigma,\rho)}{\min(\pi(S), \pi(\overline{S}))}
    ~\leq~ (c+1){\sum_{\sigma\in \Omega_{k}, \rho\in \Omega_{k+1}}
      \pi(\sigma)P(\sigma,\rho)}
    ~\leq~ (c+1)~\pi(\Omega_k).
  \end{align*}
  By Theorem~\ref{conductance}, we have
%  \[
%    \tau \geq \frac{1}{2\Phi} - 1,
%  \]
%  so
  \[
    \Phi~\geq~\frac{1}{2(\tau + 1)},
  \]
  so
  \[
    \pi(\Omega_k)~\geq~\frac{1}{2(c+1)(\tau+1)}.
  \]
  It follows that $O(c\tau)$ samples from $\pi$ are needed in expectation to
  generate a uniform $\sigma \in \Omega_k$ for any fixed $k \in [R-1]$ with
  the given balanced bias.
  Moreover, if each sample is exactly generated in $O(\tau)$
  expected time, then the total running time of this sampling algorithm is
  $O(c\tau^2)$.
  The argument when $k=R-1$ extends to $k=R$ by the detailed balance equation.
\end{proof}

For simplicity, this theorem assumes we have a method for generating samples
{\it exactly} from $\pi$.  In many graded posets, including all considered
here, we can use the \emph{coupling from the past} algorithm to generate
perfect samples in expected $O(\tau)$ steps per sample~\cite{pw}.  In cases
when we cannot sample exactly, we have the following corollary of
Theorem~\ref{crossover_thm} that only requires samples be chosen close to $\pi$.

\begin{corollary}
  We can use $\m$ to
  approximately generate samples from $\Omega_k$ to within
  $\varepsilon$ of the total variation distance of $\pi$
  in expected 
  $O(c \tau^2 \max(\log(\varepsilon^{-1}), \log(c \tau)))$ time.
\end{corollary}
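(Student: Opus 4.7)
The plan is to replace the exact sampling step in Theorem~\ref{crossover_thm} with rejection sampling from an approximation to $\pi$ produced by running the Markov chain $\m$ for a polynomial number of steps. Specifically, I would run $\m$ from an arbitrary initial state for $T$ steps to obtain a sample whose distribution $\mu$ satisfies $\|\mu - \pi\|_{tv} \le \varepsilon'$, where $\varepsilon'$ is an intermediate accuracy parameter to be chosen. Since $\tau = \tau(e^{-1})$, submultiplicativity of total variation distance under the Markov semigroup gives $T = O(\tau \log(1/\varepsilon'))$. The output algorithm is then: draw i.i.d.\ samples from $\mu$ (each requiring a fresh length-$T$ trajectory) until one lies in $\Omega_k$, and return that sample.

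Conditioned on acceptance, the output is distributed exactly as $\mu|_{\Omega_k}$, so the error analysis reduces to bounding $\|\mu|_{\Omega_k} - \pi|_{\Omega_k}\|_{tv}$. A standard manipulation shows that for any event $A$ with $\mu(A) \ge \pi(A)/2$,
\[
  \|\mu|_A - \pi|_A\|_{tv}~\le~\frac{4\|\mu - \pi\|_{tv}}{\pi(A)}.
\]
Combined with Theorem~\ref{crossover_thm}, which gives $\pi(\Omega_k) \ge 1/(2(c+1)(\tau+1))$ for the balanced bias, the output is within $O(c\tau\,\varepsilon')$ of uniform on $\Omega_k$. Choosing $\varepsilon' = \Theta(\varepsilon/(c\tau))$ then yields an $\varepsilon$-approximate sample. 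For this choice of $\varepsilon'$ we also have $\mu(\Omega_k) \ge \pi(\Omega_k) - \varepsilon' \ge \pi(\Omega_k)/2$, so the number of rejection trials is dominated by a geometric random variable with parameter $\Omega(1/(c\tau))$ and has expectation $O(c\tau)$.

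Multiplying the expected number of trials by the cost per trial gives an expected total runtime of
\[
  O(c\tau)\cdot O(\tau \log(c\tau/\varepsilon))~=~O\!\left(c\tau^2 \max\!\left(\log(c\tau),\,\log(\varepsilon^{-1})\right)\right),
\]
which matches the stated bound. The primary technical point to verify carefully is the conditional-TV inequality and the accompanying guarantee that $\mu(\Omega_k)$ remains within a constant factor of $\pi(\Omega_k)$; this is what ensures that conditioning on $\Omega_k$ amplifies the sampling error by only a polynomial factor that can be absorbed into the choice of $\varepsilon'$, rather than by an uncontrolled factor that would destroy the runtime bound. Aside from this, all remaining steps are bookkeeping applications of Theorem~\ref{crossover_thm} and the definition of mixing time.
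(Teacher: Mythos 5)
Your proposal is correct and follows the same basic route as the paper: draw approximate samples by running $\m$ for $O(\tau\log(1/\varepsilon'))$ steps, use the bound $\pi(\Omega_k)\ge 1/(2(c+1)(\tau+1))$ from Theorem~\ref{crossover_thm} to show the acceptance probability stays at $\Omega(1/(c\tau))$, and multiply the expected number of trials by the cost per trial. The one substantive difference is that you explicitly control the \emph{output} distribution: the paper's proof sets $\varepsilon^\ast=\min(\varepsilon, 1/(8(c+1)(\tau+1)))$ and only verifies that $P^t(\sigma,\Omega_k)\ge \pi(\Omega_k)-2\varepsilon^\ast$ is non-negligible, leaving implicit how close the accepted sample is to uniform on $\Omega_k$; you instead invoke the conditional-TV inequality $\|\mu|_A-\pi|_A\|_{tv}\le 4\|\mu-\pi\|_{tv}/\pi(A)$ and shrink the intermediate accuracy to $\varepsilon'=\Theta(\varepsilon/(c\tau))$ to compensate for the $1/\pi(\Omega_k)$ amplification incurred by conditioning. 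This extra care costs only a $\log(c\tau/\varepsilon)$ factor per trial, which is already absorbed by the $\max(\log(\varepsilon^{-1}),\log(c\tau))$ term, so you recover the identical runtime while giving a cleaner justification of the ``within $\varepsilon$ of $\pi$ on $\Omega_k$'' claim. Your argument is sound; the only minor point to pin down is that $\varepsilon'$ should also be capped at (a constant times) $1/((c+1)(\tau+1))$ so that $\mu(\Omega_k)\ge\pi(\Omega_k)/2$ holds even when $\varepsilon$ is of constant order, which your $\Theta$-notation accommodates with a suitable constant.
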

    
\begin{proof}
  Let
  \[
    \varepsilon^\ast = \min\left(\varepsilon,~\frac{1}{8(c+1)(\tau+1)}\right)
  \]
  be
  the desired bound on the total variation distance between the $t$-step
  distribution $P^t(\sigma,\cdot)$ starting from any initial $\sigma \in
  \Omega$ and the stationary distribution $\pi$.
  Then
  \begin{align*}
    \varepsilon^*~\ge~\frac{1}{2}\sum_{\rho \in \Omega}\abs{P^t(\sigma,\rho) - \pi(\rho)}
    ~\ge~\frac{1}{2} \abs{P^t(\sigma,\Omega_k) - \pi(\Omega_k)}.
  \end{align*}
  Theorem~\ref{crossover_thm} and our choice of $\varepsilon^*$ imply that
  \[
    P^t(\sigma,\Omega_k)~\ge~\pi(\Omega_k) - 2\varepsilon^*
    ~\ge~\frac{1}{4(c+1)(\tau+1)}.
  \]
  Each sample can be generated in $O(\tau(\varepsilon)) =
  O(\tau\log(\varepsilon^{-1}))$ steps, so the expected runtime is $O(c\tau^2
  \max(\log(\varepsilon^{-1}), \log(c\tau)))$.
\end{proof}
   % TODO Introduce the idea of finding lambda to balance weights
\section{Sampling Integer Partitions}\label{sec:background}

We demonstrate how to use the balanced bias technique to sample
from general classes of restricted integer partitions.
Integer partitions have a natural representation as \emph{Young diagrams},
which formally are finite subsets $\sigma \subseteq \Z_{\ge
0}^2$ with the property that if $(a,b) \in \sigma$,  then
\[
  \set{(x,y) \in \Z^2_{\ge 0} : 0 \leq x \leq a \text{ and } 0 \leq y \leq b} \subseteq \sigma.
\]
Young diagrams can be visualized as a connected set of unit squares on the
integer lattice with a corner at $(0,0)$ and a nonincreasing upper boundary
from left to right.  Each square in the Young diagram must be supported below
by the $x$-axis or another square and supported to the left by the $y$-axis or
another square.
We are interested in \emph{region-restricted Young diagrams}, a variant of
Young diagrams whose squares are restricted to lie in a connected region $R
\subseteq \Z_{\ge 0}^2$ such that each square is supported
below and to the left by the boundary of $R$ or another square.
Note that we use~$R$ in this section to denote a region instead of the rank of a
poset. We will see that the rank of the poset induced by the natural partial
order on $R$-restricted Young diagrams is $|R|$.

We call Young diagrams $\sigma \subseteq \Z_{\ge 0}^2$ such that
$\abs{\sigma}=n$ \emph{unrestricted integer partitions} of~$n$ and use this
term interchangeably with integer partitions.  Many well-studied classes of
restricted integer partitions have natural interpretations as region-restricted
Young diagrams. For example, the set of integer partitions of $n$ with at most
$k$ parts and with each part at most size $\ell$ give rise to the Gaussian
binomial coefficients and can be thought of as the set of Young diagrams of
size $n$ contained in a $k \times \ell$ box.

%\vspace{-0.20cm}
\begin{figure}[H]
\centering
\includegraphics[width=0.45\linewidth]{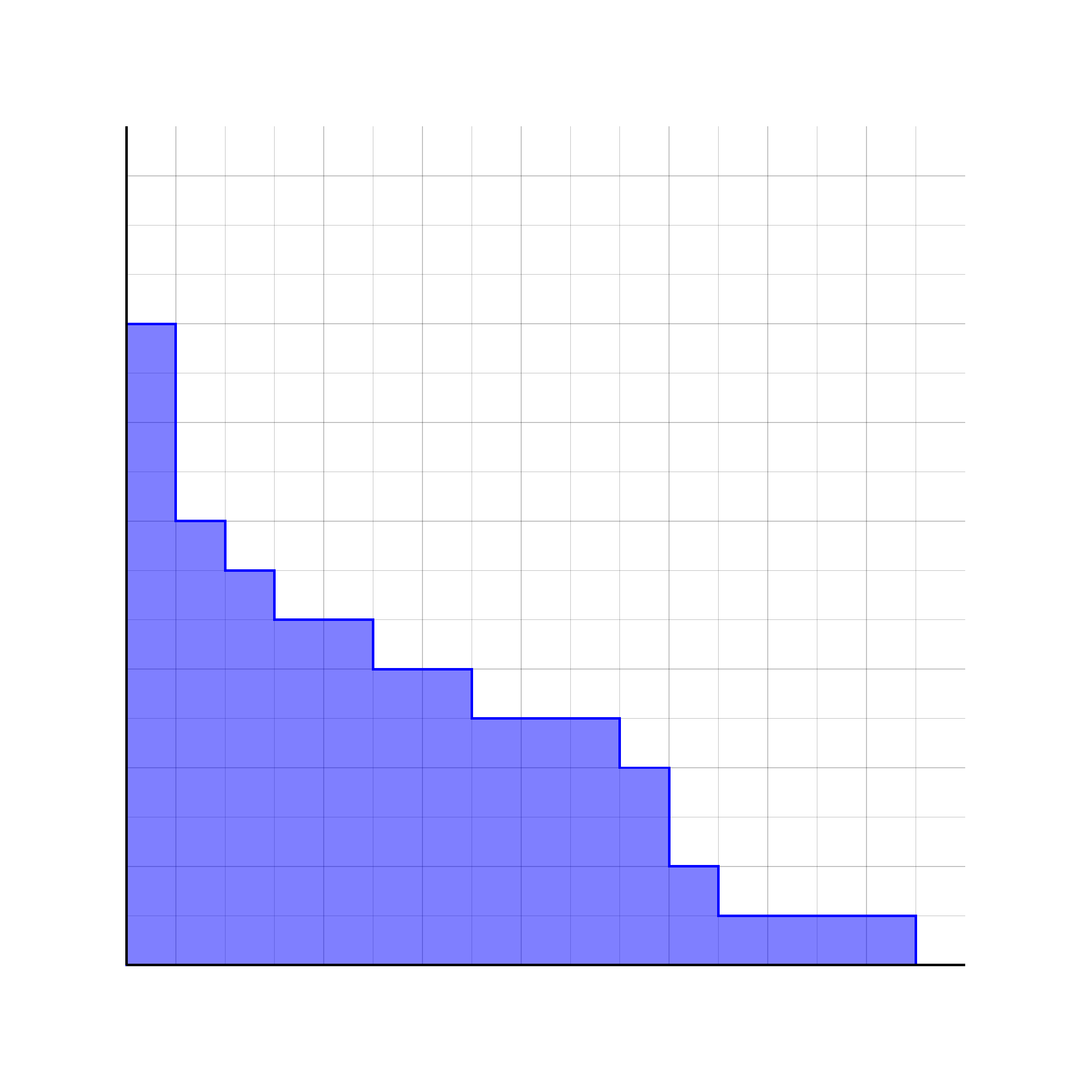}\hspace{0.5cm}
\includegraphics[width=0.45\linewidth]{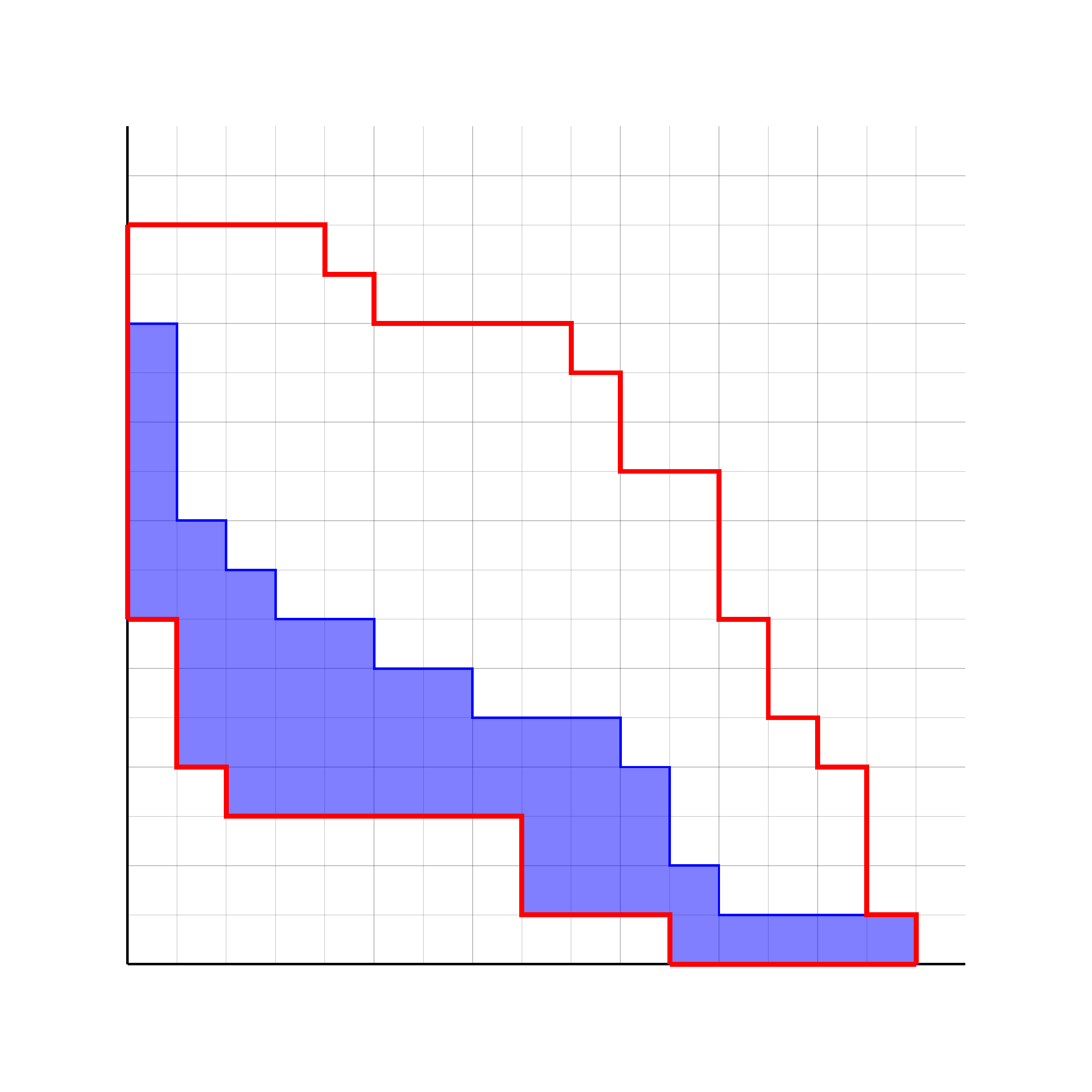}
\caption{Unrestricted and restricted integer partitions.}
\label{fig:partitions}
\end{figure}

\subsection{The Biased Markov Chain}
Let the state space $\Omega$ be the set of all Young diagrams
restricted to lie in a region $R$.
Young diagrams have a natural graded partial order via inclusion,
where $\sigma \leq \rho$ if and only if $\sigma
\subseteq \rho$, so the rank of a diagram $\sigma$ is
$r(\sigma) = \abs{\sigma}$.
The following Markov chain $\m$ on the Hasse diagram of this partial order
makes transitions that add or remove a square on the boundary of the diagram in each step
according to the Metropolis-Hastings algorithm. Therefore the stationary
distribution is a Boltzmann distribution parameterized by a bias value $\lambda$.
Let $R$ be a region such that every partition restricted to this region has at
most $\Delta$ neighboring configurations. 

%\pagebreak
\vspace{0.10in}
\noindent \ul{\textbf{Biased Markov Chain on Integer Partitions $\m$}}\label{mc}

\vspace{0.05in}
\noindent Starting at any Young diagram $\sigma_0 \subseteq R$, repeat:% for all $t \ge 1$:
\begin{itemize}
  \item Choose a neighbor $\rho$ of $\sigma_t$ uniformly at random with
  probability $1/2\Delta$.
  %\item If $|\rho| < |\sigma_t|$, set $\sigma_{t+1} = \rho$ with probability $\min(1,1/\lambda)$.
  %\item If $|\rho| > |\sigma_t|$, set $\sigma_{t+1} = \rho$ with probability $\min(1,\lambda)$.
  \item Set $\sigma_{t+1} = \rho$ with probability $\min(1, \lambda^{|\rho|-|\sigma_t|})$.
%  \item If $\rho$ is formed by removing a square from $\sigma$ set
%    $\sigma_{t+1} = \rho$ with probability $\min(1,1/\lambda)$.
%  \item If $\rho$ is formed by adding a square to $\sigma$ set
%    $\sigma_{t+1} = \rho$ with probability $\min(1, \lambda)$.
  \item With all remaining probability, set $\sigma_{t+1} = \sigma_t$.
\end{itemize}

The state space $\Omega$ is connected,
because any configuration can eventually reach the minimum configuration
$\sigma = \emptyset$ with positive probability.
By construction, $\m$ is lazy (i.e., it is always possible that $\sigma_t = \sigma_{t+1}$),
so it follows that $\m$ is an \textit{ergodic} Markov chain, and hence has
a unique stationary distribution $\pi$.
Using the detailed balance equation for Markov chains \cite{RM03}, we see that
$\pi(\sigma) = \lambda^{|\sigma|}$, for all $\sigma \in \Omega$.

%This Markov chain preserves monotonicity with respect to the partial order
%using the natural coupling given in \cite{GPR09}.  Let $\sigma_0=\emptyset$ be
%the minimum element and $\rho_0=R$ be the maximum element in $\Omega$.
%Monotonicity ensures that when $\sigma$ and $\rho$ coalesce in the coupling
%(i.e., $\sigma_t = \rho_t$), then all other $\nu \in \Omega$ will have too,
%because we always have $\sigma_t \le \nu_t \le \rho_t$ by monotonicity.  These
%conditions allow us to efficiently use \emph{coupling from the past}~\cite{pw}
%to generate perfectly uniform samples from $\pi$.

This Markov chain can be used to efficiently approximate the number of
partitions of~$n$ restricted to $R$ within arbitrarily small
specified relative error, because this problem is
\emph{self-reducible} \cite{jerrum2003counting}.
Observe that we can run $\m$ restricted to $R$ polynomially many times
and compute the mean height $m$ in the first column of the sampled Young diagrams.
Then we use $\m$ to recursively approximate the number of partitions of $n-m$ restricted to
the region
\[
  R' = \{(x,y) \in R : 1 \le x \text{ and } y \le m \},
\]
and return the product of $m$ and this approximation.

%This means that we can run $\m$ polynomially-many times restricted to
%$R$, compute the mean height $m$ in the first column of the sampled Young diagrams,
%construct an appropriate subregion $R' \subseteq R$, and recursively approximate
%the number of partitions of size $n-m$ restricted to $R'$.
%To see this, 

%If $\m$ is rapidly mixing, we have a
%fully-randomized approximation scheme for counting restricted partitions.

\subsection{Sampling Using Balanced Bias}
In the following general sampling theorem for restricted integer partitions,
the mixing time of $\m$ must hold for all bias parameters $\lambda_t$.

\begin{theorem}\label{sampling_thm}
  Let $\tau = \tau(e^{-1})$ be the mixing time of $\m$ on the region $R$.
  We can uniformly sample partitions of $k$ restricted to a region $R$ in
  expected $O(\Delta \tau^2)$ time.
  %Our algorithm can uniformly sample partitions of fixed rank
  %$k$ for $0 \leq k \leq \abs{R}$ in $O(d\tau^2)$ steps of the Markov chain.
\end{theorem}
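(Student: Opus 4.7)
The plan is to deduce Theorem~\ref{sampling_thm} by a direct application of the Balanced Bias Theorem (Theorem~\ref{crossover_thm}) to the Hasse-diagram chain $\m$ on $R$-restricted Young diagrams, and then pay an extra factor of $\tau$ per sample by invoking coupling from the past.

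First I would set up the graded-poset hypotheses. Take $\Omega$ to be the set of Young diagrams contained in the region $R$, with rank function $r(\sigma)=|\sigma|$; the maximum rank is $|R|$, which is polynomial, and each accepted transition of $\m$ adds or removes exactly one boundary square, so it respects the graded structure. The target set is $\Omega_k$, the diagrams of area exactly $k$. To invoke Theorem~\ref{crossover_thm} I need a polynomial $c \ge 2$ with $1 \le a_{\Omega,i} \le c^i$ for all $i \in \{0,1,\ldots,|R|\}$.

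Second, I would verify these coefficient bounds with $c = \max(2,\Delta)$. The lower bound $a_{\Omega,i} \ge 1$ holds because one can construct a rank-$i$ diagram inside $R$ by successively adding outer corners to $\emptyset$ in, say, lexicographic order, which preserves both the Young-diagram property and the region constraint. For the upper bound, I would count saturated upward chains of length $i$ in the Hasse diagram starting at $\emptyset$: each vertex has at most $\Delta$ upward neighbors, giving at most $\Delta^i$ such chains. Every $\sigma \in \Omega_i$ admits at least one such chain ending at it, obtained by reversing a sequence of outer-corner removals from $\sigma$ down to $\emptyset$ (such removals always exist because every Young diagram has a removable outer corner). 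Hence $a_{\Omega,i} \le \Delta^i \le c^i$.

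Third, I would apply Theorem~\ref{crossover_thm} with $c = \max(2,\Delta)$ and the balanced bias $\lambda_t$ supplied by Lemma~\ref{lemma:balanced-bias}, yielding $\pi(\Omega_k) \ge 1/(2(c+1)(\tau+1))$. So $O(c\tau)$ draws from $\pi$ are needed in expectation, and each draw can be obtained exactly in $O(\tau)$ expected time using coupling from the past~\cite{pw}. The total expected running time is therefore $O(c\tau^2) = O(\Delta \tau^2)$, as claimed.

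The only substantive obstacle is establishing the exponential coefficient bound $a_{\Omega,i} \le \Delta^i$; once that path-counting observation is in place, the result is essentially bookkeeping on top of the machinery of Section~\ref{sec:balanced-bias}. Everything else (mixing of $\m$, coupling from the past, the existence of a balanced bias) is either hypothesized or already proved earlier in the paper.
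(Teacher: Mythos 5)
Your proposal is correct and follows essentially the same route as the paper: bound the level sizes by $1 \le a_{\Omega,i} \le \Delta^i$ (the paper states $|\Omega_{k+1}|/|\Omega_k| \le \Delta$ directly, which is equivalent to your saturated-chain count), then invoke Lemma~\ref{lemma:balanced-bias} and Theorem~\ref{crossover_thm} with $c = \Delta$ and pay $O(\tau)$ per exact sample via coupling from the past. The one point you gloss over that the paper addresses is that Lemma~\ref{lemma:balanced-bias} is only an existence statement, so the algorithm must actually locate the balanced index $t \in [R^2]$; the paper does this adaptively with an $O(\log|R|)$ binary search, exploiting monotonicity of the Boltzmann distributions in $\lambda_t$, which keeps the overall bound at $O(\Delta\tau^2)$ rather than incurring a factor of $R^2$ from trying all candidate biases.
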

\begin{proof}
  There is only one such partition when $k=0$ or $k=|R|$, so assume $k \in [|R|-1]$.
  By construction $\vert\Omega_{k+1}\vert /
  \vert\Omega_{k}\vert \leq \Delta$ for all fixed $k$, so
  $1 \le \vert\Omega_i\vert \le \Delta^i$ for all $i \in \{0,1,\dots,|R|\}$.
  By Lemma~\ref{lemma:balanced-bias} there exists a balanced bias $\lambda$,
  which we can identify adaptively
  in $O(\log(|R|))$ time with a binary search as we are sampling,
  since Boltzmann distributions increase monotonically with increasing $\lambda_t$.
  Therefore, we can generate Young diagrams restricted to $R$ with any
  fixed rank in expected $O(\Delta\tau^2)$ steps of $\m$
  by Theorem~\ref{crossover_thm}.
\end{proof}

If more is known about the number of elements at each rank or the geometry of
$R$, then we can give better bounds on the runtime of this algorithm.  For
example, if $R$ is the region of a \emph{skew Young diagram} (see
Figure~\ref{fig:partitions}), a region contained between two Young diagrams,
then we can adapt Levin and Peres' mixing results about biased
exclusion processes to this setting.

\begin{theorem}[\cite{lp-be-16}]\label{thm:biased-exclusion}
Consider the biased exclusion process with bias $\beta = \beta_n = 2p_n - 1 > 0$
on the segment of length $2n$ and with $n$ particles.
Set $\alpha = \sqrt{p_n / (1 - p_n)}$.
For $\varepsilon > 0$, if $n$ is large enough, then
\[
  \tau(\varepsilon)
    \le \frac{4n}{\beta^2}\left[\log\left(\varepsilon^{-1}\right)
      + \log\left[\alpha\left(\frac{\alpha^n -1}{\alpha-1}\right)^2 \right] \right].
\]
\end{theorem}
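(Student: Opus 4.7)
The plan is to prove the mixing time bound via a monotone coupling combined with an exponentially weighted Lyapunov function that exploits the bias. First I would pass to the height function representation: identify a configuration with a lattice path $h = (h_0, h_1, \ldots, h_{2n})$ satisfying $h_0 = h_{2n} = 0$ and $|h_{i+1} - h_i| = 1$, where particles (resp.\ holes) correspond to down-steps (resp.\ up-steps). A Metropolis update picks a uniform interior vertex $i$, and if $i$ is a local max or min, it attempts to flip it, accepting a flip downward with probability $1 - p_n$ and a flip upward with probability $p_n$. The stationary measure is the $\alpha$-weighted measure on paths, $\pi(h) \propto \alpha^{2\sum_i h_i}$, with $\alpha = \sqrt{p_n/(1-p_n)}$.

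Next I would set up the standard monotone grand coupling: choose the same vertex $i$ and the same uniform $U \in [0,1]$ for every coupled copy, and update according to the Metropolis rule. Because the acceptance thresholds respect the coordinatewise partial order on height functions, this coupling is monotone, so it suffices to couple the chain started from $h_{\min}$ (the zigzag path hugging the axis) with the chain started from $h_{\max}$ (the triangular path peaking at $n$). The mixing time is then bounded by the expected coupling time, and by monotonicity, by the decay of an appropriate weighted distance.

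The heart of the argument is choosing the Lyapunov function
\[
  \Phi_t \;=\; \sum_{i=1}^{2n-1} w_i \bigl(h^+_i(t) - h^-_i(t)\bigr),
\]
with weights $w_i$ designed so that a single Metropolis attempt gives an expected contraction $\E[\Phi_{t+1} \mid \mathcal F_t] \le (1 - c\beta^2/n)\,\Phi_t$. The weights should be chosen to be a left-eigenvector of the (one-particle) biased random walk generator; solving this discrete two-point boundary problem yields weights roughly of the form $w_i \asymp \alpha^i + \alpha^{2n-i}$, and the eigenvalue gap is $\Theta(\beta^2/n)$, which matches the spectral gap of the asymmetric random walk on an interval. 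Combining this contraction with $\Phi_0 = O\bigl(\alpha\,\bigl((\alpha^n-1)/(\alpha-1)\bigr)^2\bigr)$ (the initial value is controlled by the geometric sum of weights against the maximum possible height gap $O(n)$) and applying Markov's inequality to $\Phi_t$ gives coupling time at most $(4n/\beta^2)[\log \varepsilon^{-1} + \log \Phi_0]$ in expectation, which is the claimed bound.

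The main obstacle is the drift calculation: one must verify that the bias-induced drift at local maxima and minima, when aggregated against the weights $w_i$, genuinely compounds to a $\beta^2/n$ contraction rather than the naive $\beta/n$ one would get without exponential weighting. This requires carefully expanding $\E[\Phi_{t+1} - \Phi_t \mid \mathcal F_t]$ at every vertex where the two chains disagree, showing that the discrete second-difference operator applied to $w$ cancels the linear drift, leaving only a second-order $\beta^2$ term. The asymmetric boundary conditions at $i = 1$ and $i = 2n-1$ must also be handled (this is where the symmetrization $w_i \asymp \alpha^i + \alpha^{2n-i}$ pays off), and some care is needed when the vertex $i$ is a local max/min in one chain but not the other, where monotonicity must be invoked to bound the effect of a ``desynchronized'' update.
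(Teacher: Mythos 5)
This theorem is imported verbatim from \cite{lp-be-16}; the paper states it with a citation and gives no proof of its own, so there is no internal argument to compare against. Your sketch reconstructs the strategy of the cited source: a monotone coupling on height functions, contracted in an exponentially weighted area metric whose weights are an eigenvector of the single-site update, so that the local eigenvalue deficit $1-2\sqrt{p_n(1-p_n)}=1-\sqrt{1-\beta^2}\ge\beta^2/2$ combined with the $1/(2n)$ site-selection probability yields a per-step contraction of $\beta^2/(4n)$, and hence the constant $4$ in the statement. Two points to tighten. First, the weights should be $w_i=\alpha^i$ rather than the symmetrized $\alpha^i+\alpha^{2n-i}$: with that choice one has the exact identity $\sum_{i=1}^{2n-1}\alpha^i\min(i,2n-i)=\alpha\bigl(\tfrac{\alpha^n-1}{\alpha-1}\bigr)^2$, which is precisely the weighted diameter between the extremal paths and hence the $\log\Phi_0$ term in the bound; the symmetrized weights would change this constant, and one should also note $\min_i w_i=\alpha\ge 1$ so that Markov's inequality applied to $\Phi_t$ directly bounds the probability of non-coalescence. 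Second, the decisive computation --- that the discrete second difference of $w$ cancels the first-order drift at every discrepancy site, including sites flippable in one chain but not the other, leaving contraction at rate at least $\beta^2/(4n)$ --- is described but not carried out; as written your argument establishes the shape $\tau=O\bigl(n\beta^{-2}(\log\varepsilon^{-1}+\log\Phi_0)\bigr)$ but the stated constants rest on that unexecuted verification. The outline is the right one, and it is the same route as in \cite{lp-be-16}; note also that what this paper actually needs from the theorem is the observation, made in the corollary following it, that the coupling argument survives restriction to skew-Young-diagram regions, which your sketch does not address but is not part of the statement.
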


\begin{corollary}
  If the region $R$ is a skew Young diagram contained in an $n \times n$ box,
  we can uniformly sample partitions of $k$ restricted to $R$ in expected
  $O(n^{16})$ time.
\end{corollary}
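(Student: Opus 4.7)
The plan is to apply Theorem~\ref{sampling_thm} to the chain $\m$ on Young diagrams contained in the skew region $R$, which reduces the problem to bounding the maximum degree $\Delta$ of the Hasse diagram and the mixing time $\tau(e^{-1})$ of $\m$ at a balanced bias. The main obstacle will be bounding $\tau$ uniformly over the candidate biases from Lemma~\ref{lemma:balanced-bias}, since that set includes values of $\lambda$ arbitrarily close to (and including) $1$, where the bound in Theorem~\ref{thm:biased-exclusion} degrades like $1/\beta^2$ and requires additional care for the exponential factor $\alpha^n$.

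First I would bound $\Delta$. Any Young diagram contained in a skew shape inside the $n \times n$ box has at most $O(n)$ addable-or-removable corners, one per convex/concave step of its staircase boundary, so $\Delta = O(n)$; consequently we may take $c = O(n)$ in the balanced bias construction, and the rank of the induced poset is $|R| = O(n^2)$.

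Next I would translate $\m$ into a biased simple exclusion process. The upper-right boundary of any $\sigma$ contained in $R$ is a monotone lattice path from $(0,n)$ to $(n,0)$ of length $2n$, encoded as a binary string with $n$ up-steps and $n$ right-steps. The moves of $\m$ (adding or removing a corner square) are precisely swaps of adjacent $01$ pairs at positions lying on the free part of the boundary, so with $p = \lambda/(1+\lambda)$ and $\beta = (\lambda-1)/(\lambda+1)$ the chain matches the update rule of the biased exclusion process on a segment of length $2n$ with $n$ particles, modulo the restriction that some swaps are forbidden by the inner boundary of the skew shape. A straightforward comparison argument shows this restriction can only speed up mixing relative to the free chain, so the Levin--Peres bound still applies.

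Finally I would apply Theorem~\ref{thm:biased-exclusion} at each candidate balanced bias $\lambda_t = c^{t/|R| - 1}$. For $\lambda_t \neq 1$, since $c = O(n)$ and $|R| = O(n^2)$, the smallest attained $|\beta|$ is at least of order $(\log c)/|R| = \Omega(\log n / n^2)$; substituting $\alpha = \sqrt{(1+\beta)/(1-\beta)}$ into the bound and estimating the factor $\alpha \bigl((\alpha^n - 1)/(\alpha - 1)\bigr)^2$ yields $\tau = \mathrm{poly}(n)$ in every case, with the worst case occurring at the $\lambda_t$ closest to $1$. For the single edge case $\lambda_{|R|} = 1$, the chain becomes the symmetric simple exclusion process on $2n$ sites, whose classical mixing bound of $O(n^3 \log n)$ is sufficient. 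Feeding the worst-case $\tau$ into Theorem~\ref{sampling_thm} yields an expected running time of $O(\Delta \tau^2)$, and carefully tracking the polynomial exponents through the Levin--Peres estimate gives the claimed $O(n^{16})$ bound.
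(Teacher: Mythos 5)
Your proposal follows essentially the same route as the paper: encode the boundary of a restricted Young diagram as a particle configuration, invoke the Levin--Peres bound (Theorem~\ref{thm:biased-exclusion}) for the biased values $\lambda_t\neq 1$ and Wilson's $O(n^3\log n)$ bound at $\lambda_t=1$, and feed the worst-case mixing time into the balanced-bias machinery. The only substantive difference is bookkeeping: you take $c=\Delta=O(n)$ via Theorem~\ref{sampling_thm}, whereas the paper takes $c=2$ using $a_{\Omega,i}\le p(i)\le 2^i$ and applies Theorem~\ref{crossover_thm} directly; both yield the stated polynomial.

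One step deserves a more careful justification than you give it. The assertion that ``a straightforward comparison argument shows this restriction can only speed up mixing'' is not a valid general principle --- restricting a state space can introduce bottlenecks and slow mixing arbitrarily, and standard comparison techniques do not deliver this conclusion. What actually works (and is what the paper says) is that the \emph{path coupling} contraction estimate in the Levin--Peres proof survives the restriction to a skew shape: censoring a move for both coupled copies can only decrease the expected change in their distance, so the same contraction rate, and hence the same mixing bound, holds for the restricted chain. With that substitution your argument is sound; the final exponent chase, which you assert rather than carry out, in fact gives a bound at least as good as the paper's $O(n^{16})$.
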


\begin{proof}
  The biased exclusion process on a segment of length $2n$ with $n$ particles
  is in bijection with $\m$ when the restricting region is an $n \times n$ box.
  The proof of Theorem~\ref{thm:biased-exclusion} in \cite{lp-be-16} uses a path
  coupling argument that directly extends to and gives an upper bound for the
  mixing time of $\m$ when the region $R$ is a skew Young diagram, since the
  expected change in distance of two adjacent states in the more restricted
  setting can only decrease.
  Let $\lambda_{n,t}$ denote $\lambda_t$ in an instance of size $n$.
  We analyze the three cases $\lambda_{n,t} < 1$, $\lambda_{n,t} = 1$, and
  $\lambda_{n,t} > 1$, and then bound the mixing time of $\m$ for all $\lambda_{n,t}$.

  To prove the existence of a balanced bias using
  Lemma~\ref{lemma:balanced-bias}, observe that ${a_{\Omega, i} \le p(i)
  \le 2^i}$ for all $i \in \{0, 1, \dots, n^2\}$.
  In the first case, assume $\lambda_{t,n} < 1$. Then we have $t \in [n^2 - 1]$
  since $|R| = n^2$.
  Translating $\m$ to the biased exclusion process,
  \[
    p_{n,t} = \frac{1}{1 + \lambda_{n,t}},
  \]
  \[
    \beta_{n,t} = \frac{1 - \lambda_{n,t}}{1+\lambda_{n,t}},
  \]
  and
  \[
    \alpha_n = \sqrt{\frac{1}{\lambda_{n,t}}}.
  \]
  To use Theorem~\ref{thm:biased-exclusion},
  we first prove $1/\beta_{n,t}^2 \le 10n^4$.
  To see this, observe
  that $t=n^2-1$ minimizes $\beta_{n,t}$, hence maximizes the desired quantity.
  Then
  \[
    \lim_{n\rightarrow\infty} \left(\frac{1}{\beta_{n,n^2-1}}\right)^2 = \frac{4n^2}{\log(2)^2} \le 10n^4.
  \]
  Next, since $a_{n,t} > 1$, we have
  \begin{align*}
    \alpha_{n,t}\left(\frac{\alpha_{n,t}^n - 1}{\alpha_{n,t} - 1}\right)^2
    \le~\alpha_{n,t} \left(n\alpha_{n,t}^n \right)^2
    ~\le~\frac{n^2}{\lambda_{n,t}^{n+1/2}}~\le~n^2 2^{n + 1/2},
  \end{align*}
  because $\lambda_{n,t} \ge 1/2$.
  Thus, 
  $\tau(\varepsilon) = O(n^5(\log(\varepsilon^{-1}) + n))$
  for all $\lambda_{n,t}$ by Theorem~\ref{thm:biased-exclusion}.

  In the unbiased case when $\lambda_{n,t} = 1$, Wilson~\cite{wilson} proved that
  the mixing time of
  $\m$ is~$\Theta(n^3 \log(n/\varepsilon))$.
  In the third case, $\lambda_{t,n} > 1$ so
  $t \in \{n^2+1, n^2+2,\dots, n^4\}$,
  \[
    p_{n,t} = \frac{\lambda_{n,t}}{1+\lambda_{n,t}},
  \]
  \[
    \beta_{n,t} = \frac{\lambda_{n,t}-1}{1+\lambda_{n,t}},
  \]
  and
  \[
    \alpha_{n,t} = \sqrt{\lambda_{n,t}}.
  \]
  By similar analysis, $1/\beta_{n,t}^2 \le 10n^4$ and
  \begin{align*}
    \alpha_{n,t}\left(\frac{\alpha_{n,t}^n - 1}{\alpha_{n,t} - 1}\right)^2
    \le~n^2 \lambda_{n,t}^{n+1/2}~\le~n^2 \left(2^{n^2-1}\right)^{n+1/2},
  \end{align*}
  since $\lambda_{n,t} \le 2^{n^2 - 1}$.
  Thus $\tau(\varepsilon) = O(n^5(\log(\varepsilon^{-1}) + n^3))$,
  so by Theorem~\ref{crossover_thm} we can uniformly sample
  partitions of $k$ restricted to $R$ in expected $O(n^{16})$ time.
\end{proof}
 % TODO Sampling integer partitions with the MC
\subsection{Sampling Using Log-concavity}\label{sec:truncation}

%We show that we can further improve the bounds on the running time of our
%sampling algorithm for partitions in settings where more is known about the
%values of $\pi(\Omega_i)$. For example, we show that we can improve the running
%time of our algorithm to $O(n^{9/4})$ when sampling unrestricted partitions of
%the integers.
When more is known about the stationary distribution $\pi$, specifically the
sequence $\{|\Omega_i|\}_{i=0}^\infty$, we can typically improve the bounds on the running
time of our algorithm. In particular, we show that we can sample unrestricted
integer partitions in expected $O(n^{9/4})$ time.  Our primary techniques
involve using a compressed representation of partitions and using log-concavity
to show strong probability concentration around partitions of the desired
size.  These techniques extend to a variety of settings where log-concavity or
probability concentration can be shown.

To sample integer partitions of $n$, we set the bias parameter
$\lambda_n = p(n-1)/p(n)$ to force the
stationary distribution to concentrate at~$n$.
The sequence $\{p(k)\}_{k=26}^\infty$ is log-concave~\cite{dp, jln},
%(i.e., $p(k)^2 \ge p(k-1)p(k+1)$)
so it follows that the sequence
$\{p(k) \lambda_n^k \}_{k=26}^\infty$ is, too.
Log-concave sequences of positive terms are unimodal, which implies
that the mode of our stationary distribution is at $k=n$.
Moreover, we show how log-concavity gives exponential decay on both
sides of the mode, and hence strong concentration.

%This works because~$p(k)$ is
%log-concave for $k \ge 26$ \cite{dp, jln}.  It follows that the sequence $p(k) =
%\lambda^k$ is log-concave for any $\lambda > 0$ and $k \ge 26$, and so~$p(k)
%\lambda_n^k$ is log-concave with the dual mode $n-1$ and $n$.

We now argue that we need only consider Young diagrams that lie under the curve
$y=2n/x$ to sample partitions of $n$, as all Young diagrams with squares
above that curve must have more than $2n$ squares total.

\begin{proposition}\label{claim:space}
A Young diagram that lies under the curve $y=2n/x$ 
can be stored in $O(n^{1/2} \log(n))$ space.
\end{proposition}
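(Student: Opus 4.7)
The plan is to represent a Young diagram by the sequence of its \emph{corners}, i.e., the positions where the upper boundary steps down, and then show that under the curve $y=2n/x$ there can be only $O(n^{1/2})$ such corners. Since each corner is a pair of integer coordinates bounded by $2n$, each can be stored using $O(\log n)$ bits, yielding the claimed $O(n^{1/2}\log n)$ total.

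More concretely, I would list the distinct column heights appearing in the diagram in decreasing order as $h_1 > h_2 > \cdots > h_s \ge 1$, and let $x_i$ denote the rightmost column whose height equals $h_i$. Then $x_1 < x_2 < \cdots < x_s$, and the pairs $(x_i,h_i)$ are precisely the corners of the Young diagram; the whole diagram is reconstructible from this list. Because the $x_i$ are strictly increasing positive integers we have $x_i \ge i$, and because the $h_i$ are strictly decreasing positive integers we have $h_i \ge s+1-i$. The assumption that the diagram lies under the curve $y=2n/x$ means that each corner satisfies $x_i h_i \le 2n$.

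The heart of the argument is combining these inequalities. For every $i \in \{1,\dots,s\}$,
\[
  i \,(s+1-i) \;\le\; x_i \, h_i \;\le\; 2n.
\]
Choosing $i = \lceil s/2 \rceil$ gives $\Omega(s^2) \le 2n$, so $s = O(n^{1/2})$. This is the step I expect to be the only real content; the rest is encoding. Storing the $s$ pairs $(x_i,h_i)$ with each coordinate bounded by $2n$ requires $O(\log n)$ bits per coordinate, for a total of $O(s \log n) = O(n^{1/2}\log n)$ bits, as claimed.

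The main potential obstacle is purely notational: making sure the corner representation covers degenerate edge cases (empty diagram, single column, columns of height $1$) and that it faithfully encodes the diagram so the space bound is meaningful for subsequent algorithmic use. Once the bijection between diagrams and decreasing corner sequences $(x_i,h_i)$ is spelled out, the counting inequality above delivers the bound immediately.
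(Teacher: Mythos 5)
Your proof is correct, but it encodes the diagram differently from the paper. The paper's proof observes that no square of the diagram can have \emph{both} coordinates exceeding $\sqrt{2n}$ (else its coordinate product would exceed $2n$), and therefore stores a fixed-length list of $2\floor{\sqrt{2n}}$ numbers: the heights of the first $\floor{\sqrt{2n}}$ columns together with the widths of the first $\floor{\sqrt{2n}}$ rows, each an integer of magnitude $O(n)$. You instead store the corner list $(x_i,h_i)$ and bound the number of corners $s$ via the inequality $i(s+1-i) \le x_i h_i \le 2n$ evaluated at $i = \ceil{s/2}$, giving $s = O(n^{1/2})$. Both arguments exploit the same constraint $xy \le 2n$ and both yield $O(n^{1/2}\log n)$ bits, so neither is a gap. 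Your corner-based version is variable-length but arguably more natural for the algorithm (it is the representation the paper itself alludes to in its techniques overview, and it is what Proposition~\ref{claim:moves} implicitly needs when it enumerates the $O(n^{1/2})$ legal add/remove moves), and your counting lemma that a Young diagram under $y=2n/x$ has $O(n^{1/2})$ corners is a slightly stronger structural statement than the paper proves. The paper's version buys a simpler, fixed-size encoding with no case analysis about degenerate corner configurations. One small caution: be explicit about which lattice point of the corner cell you require to lie under the curve, so that $x_i h_i \le 2n$ (rather than, say, $(x_i-1)(h_i-1) \le 2n$) is literally what the hypothesis gives you; this only changes constants, but it should be stated.
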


\begin{proof}
For any square in the Young diagram, both of its coordinates are not greater than~$\sqrt{2n}$, for then it would lie above $y=2n/x$.  We may record the height of
each column and the width of each row in the range $\setto[0,1]{{\floor{\sqrt{2n}}
- 1}}$ to capture the position of every square in the diagram.
%(Recall that we label squares in $\sigma$ by their lower-left coordinate.)
Therefore, we can represent the diagram using exactly these $2\floor{\sqrt{2n}}$
heights and widths.
\end{proof}

Using the compressed representation in the previous proposition, we see that
there will not be more than $O(n^{1/2})$ possible transitions at any possible
state, since our algorithms adds or removes at most one square on the upper
boundary in each step.  Note that we can adapt this technique in the
general case for any region $R$ that lies under the curve $y = 2n/x$.

\begin{proposition}\label{claim:moves}
%For any Young diagram $\sigma$ restricted to lie under the curve $y=2n/x$,
%there will be at most $4\sqrt{2n}$ potential transitions.
There are at most $4\sqrt{2n}$ potential transitions for any Young diagram
that lies under the curve $y=2n/x$.
\end{proposition}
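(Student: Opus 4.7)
The plan is to enumerate all positions on the upper-right staircase boundary where a unit square can be added or removed, and to bound this count in terms of the number $d$ of distinct column heights in the diagram.

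First, I would show that the number of potential transitions is at most $2d+1$. Writing the distinct column heights as $H_1 > H_2 > \cdots > H_d$ with multiplicities $c_1, \ldots, c_d$, the staircase boundary alternates horizontal runs of length $c_i$ with vertical drops of length $H_i - H_{i+1}$ (setting $H_{d+1} = 0$). This decomposition exposes exactly $d$ convex corners (the removable topmost squares), $d-1$ internal concave corners (the addable squares between successive distinct heights), and at most $2$ boundary-addition sites (on top of column $1$ and as a fresh rightmost column of height $1$), for a total of at most $2d+1$.

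Next, I would bound $d$ using the curve constraint. The topmost square in column $k$ has its upper-right vertex at $(k, h_k)$, so lying below $y = 2n/x$ forces $k \cdot h_k \le 2n$ for every $k \in \{1, \ldots, W\}$. The rightmost column of height $H_i$ sits at position $k_i = c_1 + \cdots + c_i \ge i$, and $H_i \ge d - i + 1$ since $H_1, \ldots, H_d$ are distinct positive integers. Hence
\[
  i(d - i + 1) \le k_i \cdot H_i \le 2n \quad \text{for every } i \in \{1, \ldots, d\}.
\]
Choosing $i \approx (d+1)/2$ maximizes the left-hand side and yields $(d+1)^2 \le 8n + 1$, so $d \le \sqrt{8n+1} - 1$.

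Combining the two ingredients gives at most $2d + 1 \le 2\sqrt{8n+1} - 1 \le 4\sqrt{2n}$ potential transitions, where the last inequality uses $2\sqrt{8n+1} - 2\sqrt{8n} \le 1$ for $n \ge 1$. The main subtlety is the corner-count bookkeeping, particularly handling the two boundary-addition sites and any transitions blocked when the diagram meets the boundary of $R$; once that is settled, the pigeonhole bound on $d$ from $k \cdot h_k \le 2n$ at a central column is immediate and matches the extremal staircase diagram with heights $d, d-1, \ldots, 1$, confirming that the constant $4$ is essentially tight.
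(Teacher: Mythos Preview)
Your argument is correct, but it takes a genuinely different route from the paper's. The paper's proof is a two-line application of Proposition~\ref{claim:space}: every square under $y=2n/x$ has at least one coordinate at most $\sqrt{2n}$, so every add/remove move occurs in one of the first $\lfloor\sqrt{2n}\rfloor$ columns or one of the first $\lfloor\sqrt{2n}\rfloor$ rows; since each row or column admits at most two moves (add at the end, remove from the end), the total is at most $4\lfloor\sqrt{2n}\rfloor$.

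Your approach instead reads off the exact transition count $2d+1$ from the staircase structure (with $d$ the number of distinct column heights), and then bounds $d$ by evaluating the constraint $k_i H_i \le 2n$ at a central index $i\approx (d+1)/2$ to force $(d+1)^2\le 8n+1$. This is longer and requires the corner bookkeeping you flag, but it yields more: the bound $2d+1$ is the true number of transitions rather than an overcount, and your bound on $d$ is essentially sharp (the staircase $d,d-1,\dots,1$ saturates it). The paper's argument, by contrast, double-counts any move whose square has both coordinates below $\sqrt{2n}$, trading precision for brevity. Either proof suffices for the application, where only the order $O(\sqrt{n})$ matters.
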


\begin{proof}
Observe that since the squares in any row or column must be connected, there
are at most two valid moves in any particular row or column.
%: that of removing
%the current furthest square, or that of adding the next one.
Therefore, by Proposition~\ref{claim:space}, there are at most $4\floor{\sqrt{2n}}$ possible
transitions from any such Young diagram.
\end{proof}

We now shift our attention to bounding $\lambda_n$ and the consequences it has on
both the mixing time of $\m$ and the concentration of $\pi$.
Hardy and Ramanujan~\cite{hr} gave the classical asymptotic formula for the partition numbers
\[
  p(n) \sim \frac{1}{4\sqrt{3}n} e^{\pi \sqrt{2n/3}},
\]
and we use related bounds given in $\cite{dp}$ for the following lemma.
The proof is deferred to the next subsection.
%to derive tight bounds for $\lambda_n$, which we prove in the next subsection.

%Using bounds for $p(n)$ in \cite{dp},
%we derive tight bounds on $\lambda_n$ and defer the proof to Section~\ref{sec:appendix-proofs}.

% Bound on \lambda_n -----------------------------------------------------------
\begin{lemma}\label{lem:lambda-bound}
For all $n \ge 30$, we have
\[
  1 - \frac{2}{\sqrt{n}}~<~\lambda_n~<~1- \frac{1}{\sqrt{n}}.
\]
\end{lemma}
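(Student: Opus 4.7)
The plan is to sandwich $\lambda_n = p(n-1)/p(n)$ between the two target values by reducing everything to the Hardy–Ramanujan approximation and tracking the error explicitly. Writing $p_{HR}(n) = e^{\pi\sqrt{2n/3}}/(4\sqrt{3}\,n)$, I would decompose
\[
\lambda_n = \frac{p(n-1)}{p_{HR}(n-1)} \cdot \frac{p_{HR}(n-1)}{p_{HR}(n)} \cdot \frac{p_{HR}(n)}{p(n)},
\]
so the first and third factors are multiplicative errors and the middle factor is the explicit asymptotic ratio. The key computation is the middle factor: since $\sqrt{n-1}-\sqrt{n} = -1/(\sqrt{n-1}+\sqrt{n})$, we get
\[
\frac{p_{HR}(n-1)}{p_{HR}(n)} = \frac{n}{n-1}\exp\left(-\frac{\pi\sqrt{2/3}}{\sqrt{n-1}+\sqrt{n}}\right).
\]
Setting $c_0 = \pi\sqrt{2/3}/2 \approx 1.2825$ and expanding to second order in $1/\sqrt{n}$ yields
\[
\frac{p_{HR}(n-1)}{p_{HR}(n)} = 1 - \frac{c_0}{\sqrt{n}} + O\!\left(\frac{1}{n}\right),
\]
and the conceptual reason the claim is true is precisely that $1 < c_0 < 2$, leaving a $\Theta(1/\sqrt{n})$ margin on each side.

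Next, for the outer factors I would invoke the explicit two-sided inequalities on $p(n)$ from DeSalvo–Pak~\cite{dp}, which give multiplicative errors of the form $1 + O(1/\sqrt{n})$ relative to $p_{HR}$. Combining these with the middle factor, I obtain
\[
\lambda_n = 1 - \frac{c_0}{\sqrt{n}} + \frac{E_n}{\sqrt{n}},
\]
where $E_n \to 0$ can be bounded explicitly in terms of the DeSalvo–Pak error constants. The desired upper bound $\lambda_n < 1 - 1/\sqrt{n}$ becomes $E_n < c_0 - 1 \approx 0.2825$, and the lower bound $\lambda_n > 1 - 2/\sqrt{n}$ becomes $E_n > c_0 - 2 \approx -0.7175$, so the upper bound is the binding constraint.

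The main obstacle is purely quantitative: pinning down the constants in the error term $E_n$ well enough that the slack $c_0 - 1 \approx 0.2825$ absorbs them already at $n = 30$, rather than only for asymptotically large $n$. I would handle this by (i) taking the sharp form of the DeSalvo–Pak bounds, (ii) using the inequality $e^{-x} \le 1 - x + x^2/2$ for $x \ge 0$ on the middle factor, and (iii) checking monotonicity of the resulting explicit upper bound in $n$ so that it suffices to verify the inequality at the threshold $n = 30$. Any residual small cases that the general estimate fails to cover can be verified directly by computing $p(n-1)/p(n)$ for the finite list of $n$ in question, since partition numbers are easy to tabulate up to any fixed bound.
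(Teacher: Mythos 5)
Your overall strategy is the same as the paper's: express $\lambda_n=p(n-1)/p(n)$ via explicit Hardy--Ramanujan/DeSalvo--Pak estimates, isolate the exponential ratio whose first-order term is $-c_0/\sqrt{n}$ with $c_0=\pi/\sqrt{6}\approx 1.2825\in(1,2)$, use $e^{-x}\le 1-x+x^2/2$, and finish the small cases ($30\le n<160$ in the paper) by direct computation of $p(n-1)/p(n)$. The key constant and the reason the lemma is true are identified correctly, and the observation that the upper bound is the binding constraint matches the paper (which likewise only writes out the upper-bound half).

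However, there is one step in your plan that, as written, would fail quantitatively. You propose to control the outer factors $p(n-1)/p_{HR}(n-1)$ and $p_{HR}(n)/p(n)$ by two-sided bounds of the form $1+O(1/\sqrt{n})$ relative to the crude leading term $p_{HR}(m)=e^{\pi\sqrt{2m/3}}/(4\sqrt{3}\,m)$. But each of these ratios genuinely deviates from $1$ by about $0.42/\sqrt{m}$ (chiefly from the $(1-1/\mu_m)$ factor and the $\sqrt{24m-1}$ versus $\sqrt{24m}$ discrepancy in the exponent), so if you bound the two outer factors independently in the worst direction you pick up an error of roughly $0.84/\sqrt{n}$ --- far more than the available slack $c_0-1\approx 0.28$. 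The argument only closes because these first-order deviations are essentially the \emph{same} function evaluated at $n-1$ and at $n$, so they cancel to order $n^{-3/2}$ in the ratio; you must either track that cancellation explicitly or, as the paper does, take the reference term to be the exact DeSalvo--Pak leading term $\nu_m(1-1/\mu_m)e^{\mu_m}$, against which the error $1+e^{\mu_m/2}$ is exponentially small in $\sqrt{m}$ and the ratio of reference terms can be bounded cleanly by $e^{\mu_{n-1}-\mu_n}\left(\tfrac{n}{n-1}\right)^3$. Your parenthetical ``take the sharp form of the DeSalvo--Pak bounds'' points in the right direction, but the decomposition relative to $p_{HR}$ with generic error constants does not by itself yield $E_n\to 0$, let alone $E_n<0.2825$ at $n=30$; this is the gap you would need to close.
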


%\textbf{SMOOTH THIS OUT.}
%The following result follows immediately from this fact, along with a recently
%derived result given by Levin and Peres as Theorem~$23.4$ in \cite{LPW} in the
%context of the biased exclusion process at this value of $\lambda$.

\begin{theorem}\label{thm:mainmixing}
%  When $\lambda = O(1-n^{-1/2})$, the Markov chain $\m$ mixes in $O(n^2
%  \log(\varepsilon^{-1}))$ time.
The Markov chain $\m$ with bias $\lambda_n$ restricted to
the region $R$ bounded by the curve $y = 2n/x$
mixes in $O(n^{3/2}(\log(\varepsilon^{-1}) + n^{1/2}))$.
\end{theorem}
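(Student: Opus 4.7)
The proof will combine Theorem~\ref{thm:biased-exclusion} with the sharp bound on $\lambda_n$ from Lemma~\ref{lem:lambda-bound} and the structural properties of $R$ established in Propositions~\ref{claim:space} and~\ref{claim:moves}.

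First, I would translate $\lambda_n$ into biased exclusion parameters. By Lemma~\ref{lem:lambda-bound}, $\lambda_n = 1 - \Theta(n^{-1/2})$, from which $p_n = 1/(1+\lambda_n) > 1/2$, the bias $\beta = (1-\lambda_n)/(1+\lambda_n) = \Theta(n^{-1/2})$, and the exclusion parameter $\alpha = \sqrt{1/\lambda_n} = 1 + \Theta(n^{-1/2})$. In particular $1/\beta^2 = \Theta(n)$ and $\alpha-1 = \Theta(n^{-1/2})$, which drive the $n^{3/2}$ coefficient in the target bound.

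I would then map Young diagrams contained in $R$ to configurations of a biased exclusion process on the lattice path along the boundary, exactly as in the skew Young diagram case of the earlier corollary. The path coupling argument underlying Theorem~\ref{thm:biased-exclusion} extends to this restricted setting because adding the geometric constraint $\sigma \subseteq R$ can only decrease the expected coupling distance between two neighboring configurations, so the mixing bound still applies.

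The main technical step is choosing the right effective length $N$ to feed into Theorem~\ref{thm:biased-exclusion}. A naive application with $N$ equal to the perimeter of the $2n \times 2n$ bounding box would yield only an $O(n^{5/2})$ bound. To obtain the stated $O(n^{3/2}(\log\varepsilon^{-1} + n^{1/2}))$, I would exploit the shape of $R$: by Propositions~\ref{claim:space} and~\ref{claim:moves}, any diagram under the curve $y = 2n/x$ is described by $O(\sqrt{n})$ corners lying in an effective $O(\sqrt{n}) \times O(\sqrt{n})$ window. Taking $N = O(\sqrt{n})$ in Theorem~\ref{thm:biased-exclusion} then gives coefficient $4N/\beta^2 = O(n^{3/2})$, while the inner quantity $\log\bigl[\alpha\bigl((\alpha^N-1)/(\alpha-1)\bigr)^2\bigr]$ evaluates to $O(\log n) = O(\sqrt{n})$ because $\alpha^{O(\sqrt{n})} = (1+\Theta(n^{-1/2}))^{O(\sqrt{n})} = O(1)$ and $\alpha - 1 = \Theta(n^{-1/2})$, so $(\alpha^N-1)/(\alpha-1) = O(\sqrt{n})$.

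The hard part will be rigorously justifying the use of this effective path length $O(\sqrt{n})$ rather than the full bounding-box perimeter $O(n)$. This means adapting the Levin--Peres coupling so that the per-step contraction estimate reflects the narrow shape of $R$, where the curve $y = 2n/x$ forces Young diagrams to concentrate near their Durfee square, rather than the worst-case diameter of the $2n \times 2n$ bounding box. Once that contraction estimate is in hand, plugging in the $\beta$ and $\alpha$ above yields the claimed bound.
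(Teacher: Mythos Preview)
Your overall strategy is right—translate to biased-exclusion parameters, invoke the Levin--Peres coupling bound, and use Proposition~\ref{claim:moves} together with Lemma~\ref{lem:lambda-bound}—but you have misidentified \emph{where} the $O(\sqrt{n})$ enters the Levin--Peres formula, and the ``hard part'' you flag is neither needed nor likely provable.

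The two occurrences of $n$ in Theorem~\ref{thm:biased-exclusion} play different roles. The prefactor $4n/\beta^2$ comes from the number of sites (equivalently, the maximum degree in the Hasse diagram); this is the term that Proposition~\ref{claim:moves} legitimately shrinks to $O(\sqrt{n})$. The exponent in $\alpha^{n}$, however, comes from the diameter of $\Omega$ in the exponential path-coupling metric, which depends on how far the diagram can extend along the axes. In the region $R$ under $y=2n/x$, a Young diagram can still reach coordinate $2n$ along either axis, so the relevant diameter bound involves $\alpha^{O(n)}$, not $\alpha^{O(\sqrt{n})}$. Substituting a single ``effective length'' $N=O(\sqrt{n})$ into both places is not justified, and your claim $\alpha^{O(\sqrt{n})}=O(1)$ would give an inner term of $O(\log n)$—strictly better than the theorem asserts—precisely because this substitution is too aggressive.

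What the paper actually does is separate the two contributions: it replaces $4n$ by (twice) the number of moves $4\sqrt{2n}$ from Proposition~\ref{claim:moves}, and replaces the bracketed $\alpha$-expression by $\log(\mathrm{diam}(\Omega))$ with $\mathrm{diam}(\Omega)\le |R|\,\alpha^{2n}$. The $O(\sqrt{n})$ inside the logarithm then comes \emph{not} from an effective short segment but from the computation
\[
\log\bigl(\alpha^{2n}\bigr)=n\log(1/\lambda_n)\le n\cdot \frac{3}{\sqrt{n}}=3\sqrt{n},
\]
using the lower bound on $\lambda_n$ from Lemma~\ref{lem:lambda-bound}. In other words, the exponent stays $O(n)$ but $\log\alpha=O(n^{-1/2})$ does the work. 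No adaptation of the coupling to ``concentrate near the Durfee square'' is required; once you dissect the formula this way, the bound follows by direct calculation.
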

%TODO cite appendix for the full proof.

\begin{proof}
We modify Theorem~\ref{thm:biased-exclusion} and its proof in \cite{lp-be-16}.
In this biased exclusion process, $\lambda = \lambda_n$,
$\beta = (1-\lambda)/(1+\lambda)$, and $\alpha = \sqrt{1/\lambda}$.
By Proposition~\ref{claim:moves}, there are at most $4\sqrt{2n}$ transitions
from any state, so for $n$ large enough
\[
  \tau(\varepsilon)
    ~\le~\frac{8\sqrt{2n}}{\beta^2}\left[\log\left(\varepsilon^{-1}\right)
      + \log\left(\text{diam}\left(\Omega\right)\right) \right],
\]
where $\text{diam}(\Omega)$ is the maximum length path between any two states,
as defined in \cite{LPW}.
Therefore, we have $\text{diam}(\Omega) \le |R|\alpha^{2n}$ and
\[
  |R|~\le~2n H_{2n}~\le~2n(\log(2n)+1),
\]
so
\[
  \text{diam}(\Omega)~\le~2n(\log(2n)+1)\alpha^{2n}~=~2n(\log(2n)+1)\lambda^{-n}.
\]
By Lemma~\ref{lem:lambda-bound} and the bound $1+x \le e^x$, for all $x \in \R$,
\begin{align*}
  \log\left(\lambda^{-n}\right)
  &\le \log\left(\left(1+\frac{2}{\sqrt{n}-2}\right)^n\right)
  \le 3\sqrt{n},
\end{align*}
for $n$ sufficiently large.
We have
\[
  \frac{1}{\beta}~\le~2\sqrt{n} - 1
\]
by Lemma~\ref{lem:lambda-bound}.
Therefore, $\tau(\varepsilon) = O(n^{3/2}(\log(\varepsilon^{-1}) + n^{1/2}))$.
\end{proof}

%Another key observation we can use to improve the running time is that
%many samples of size larger than $n$ do not have to be rejected and can be
%``repaired'' into samples of
%size $n$ without affecting the distribution on $\Omega_n$.
Another key observation we make to generate partitions of $n$ more efficiently
is to salvage samples larger than $n$ instead of rejecting them,
while preserving uniformity on the distribution $\Omega_n$.
For any $k \ge 0$, consider the function $f_k : \Omega_n \rightarrow
\Omega_{n+k}$ that maps a partition $\sigma=(\sigma_1,\sigma_2,\dots,\sigma_m)$
to $f_k(\sigma) = (\sigma_1+k,\sigma_2,\dots,\sigma_m)$.
Note that $\sigma_1 \ge \sigma_2 \ge \dots \ge \sigma_m$ since $\sigma$ is a Young diagram.
Clearly $f_k$ is injective, so we
can consider the inverse map $f^{-1}_k((\rho_1,\rho_2,\dots,\rho_\ell))$ that subtracts
$k$ from $\rho_1$ if $\rho_1 - k \ge \rho_2$, and is invalid otherwise.  Then, define
$g : \Omega_{\ge n} \rightarrow \Omega_n\cup\set{0}$ as 
\begin{align*}
  g((\rho_1,\rho_2,\dots,\rho_\ell)) = \begin{cases}
  (\rho_1-k,\rho_2,\dots,\rho_\ell) &
    \text{if $\rho_1+\rho_2+\dots+\rho_\ell = n+k$ and $\rho_1 - k
\ge \rho_2$} \\
  0 & \text{otherwise}.
\end{cases}
\end{align*}
%where condition $(*)$ is $\rho_1+\rho_2+\dots+\rho_\ell = n+k$ and $\rho_1 - k
%\ge \rho_2$. 
The following lemma, whose proof is deferred to the next
subsection, uses the log-concavity of the partition numbers to
give a strong lower bound on the success of the map $g$.

\begin{lemma}\label{lem:rejection}
Let $\sigma$ be a random Young diagram from the stationary distribution of $\m$, and let
$g$ be the function defined above.  Then for all $n$ sufficiently large,
\[
  \Pr[g(\sigma) \text{~generates a partition of } n] \ge \frac{1}{160n^{1/4}}.
\]
\end{lemma}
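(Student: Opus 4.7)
The plan is to rewrite the success probability as a weighted geometric sum in $\lambda_n$ by using the injectivity of the lifts $f_k$, and then to lower bound the two resulting factors separately via log-concavity of the partition numbers and the explicit bounds on $\lambda_n$ from Lemma~\ref{lem:lambda-bound}.

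First, for any $\tau\in\Omega_n$ and $k\in\{0,1,\dots,n\}$, the lift $f_k(\tau)$ differs from $\tau$ only by $k$ new squares at height $j=0$ whose positions satisfy $i+1\le\tau_1+k\le 2n$, so $f_k(\tau)\in\Omega$. Since each $f_k$ is injective, its image contributes exactly $p(n)$ elements of $\Omega_{n+k}$ to the event $g(\sigma)\in\Omega_n$, giving
\[
  \Pr[g(\sigma)\in\Omega_n] ~\ge~ \sum_{k=0}^n \frac{p(n)\lambda_n^{n+k}}{Z}
  ~=~ \frac{p(n)\lambda_n^n}{Z}\cdot\frac{1-\lambda_n^{n+1}}{1-\lambda_n},
\]
where $Z=\sum_{\rho\in\Omega}\lambda_n^{|\rho|}$. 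Lemma~\ref{lem:lambda-bound} yields $1-\lambda_n<2/\sqrt{n}$ and $\lambda_n^{n+1}\le(1-1/\sqrt{n})^{n+1}\le e^{-\sqrt n}$, so the geometric factor exceeds $\sqrt{n}/3$ once $n$ is sufficiently large.

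The main obstacle is proving $Z\le C\,n^{3/4}\,p(n)\lambda_n^n$ for an explicit constant $C$. The approach is to use the log-concavity of $\{p(j)\}_{j\ge 26}$ from \cite{dp,jln}, which transfers to the sequence $b_j:=p(j)\lambda_n^j$. Because $\lambda_n=p(n-1)/p(n)$ was selected precisely so the two central values coincide, the log-ratios
\[
  \log r_j~:=~\log\!\frac{b_{j+1}}{b_j}~=~[f(j+1)-f(j)]-[f(n)-f(n-1)],
\]
with $f(j):=\log p(j)$, vanish at $j=n-1$ and form a difference of first differences of $f$. Using the quantitative Hardy-Ramanujan type inequalities in \cite{dp}, this satisfies $\log r_{n+i}\le -c(i+1)/n^{3/2}$ for $i\ge 0$ and an explicit $c>0$, since $f''(n)=-\Theta(n^{-3/2})$ in a neighborhood of $n$. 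Telescoping yields the Gaussian bound $b_{n+k}/b_n\le\exp(-ck^2/(2n^{3/2}))$, and a symmetric argument handles $j<n$. Summing both tails produces $Z/b_n=O(n^{3/4})$, with the negligible contributions from $j<26$ absorbed into the constant.

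Combining these estimates gives $\Pr[g(\sigma)\in\Omega_n]\ge c'/n^{1/4}$; the remaining bookkeeping is tracking the constants from \cite{dp} through the inequalities to certify the target $1/(160\,n^{1/4})$ for all $n$ sufficiently large.
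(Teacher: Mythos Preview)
Your proposal is correct and follows the same overall structure as the paper: both express the success probability as $\frac{p(n)\lambda_n^{\,n}}{Z}\sum_{k=0}^{n}\lambda_n^{\,k}$, bound the geometric factor via Lemma~\ref{lem:lambda-bound}, and then show $Z=O(n^{3/4})\,p(n)\lambda_n^{\,n}$ (the paper records this as Lemma~\ref{lemma:Z_n}). The only real difference is in that last step. You derive a pointwise Gaussian tail $b_{n+k}/b_n\le\exp\bigl(-ck^2/(2n^{3/2})\bigr)$ from quantitative second-difference bounds on $\log p(j)$, whereas the paper takes a more elementary route: it evaluates $f(n\pm k)/f(n)$ directly from the Hardy--Ramanujan asymptotic at the single offset $k\approx 3n^{3/4}$ (namely $n\pm k=(\sqrt{n}\pm n^{1/4})^2$), finds both ratios below $0.85$, and then uses bare log-concavity (monotonicity of successive ratios, with no quantitative gap needed) to obtain block-geometric decay, giving $Z\le k\,f(n)\cdot\frac{2}{1-0.85}<40\,n^{3/4}f(n)$. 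Your approach yields a sharper tail profile; the paper's avoids having to extract explicit second-difference constants from~\cite{dp} and makes the constant $40$ drop out immediately.
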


Assembling the ideas in this section, we now
formally present our Markov chain Monte Carlo algorithm for
generating partitions of $n$ uniformly at random.

\vspace{0.10in}
\noindent \underline{\textbf{Algorithm for Sampling Integer Partitions}}

\vspace{0.05in}
\noindent
Repeat until success:
\begin{itemize}
  \item Sample $\sigma \in \Omega$ using $\m$.% using coupling from the past.
  \item If $n \le \abs{\sigma} \le 2n$ and $g(\sigma) \ne 0$, return $g(\sigma)$.
\end{itemize}

\noindent
Note that we restrict $\abs{\sigma} \le 2n$ instead of $\abs{\sigma} \le
2n\log(n)$ so that $g$ maps to $\Omega_n$ uniformly.  All partitions of $2n$
are elements of $\Omega_{2n}$, but the same is not true for larger partitions
since the bounding region $R$ is the curve $y = 2n/x$.
Lastly, recall that
coupling from the past can be used efficiently in this setting to generate
perfectly uniform samples, because the natural coupling is monotone and there
is a single minimum and maximum configuration \cite{GPR09}.

\begin{theorem}
Our Markov chain Monte Carlo algorithm for generating a uniformly
random partition of $n$ runs in expected $O(n^{9/4})$ time and $O(n^{1/2}\log(n))$
space.
\end{theorem}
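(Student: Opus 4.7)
The plan is to assemble three ingredients from the preceding subsection: the mixing time bound $\tau(e^{-1}) = O(n^2)$ for the chain $\m$ on the region $R$ bounded by $y = 2n/x$ with bias $\lambda_n$ (Theorem~\ref{thm:mainmixing}), the salvage success probability $\Pr[g(\sigma) \in \Omega_n] = \Omega(n^{-1/4})$ (Lemma~\ref{lem:rejection}), and the compressed diagram representation in $O(n^{1/2}\log(n))$ space (Proposition~\ref{claim:space}). Multiplying the per-sample cost by the expected number of outer iterations gives the claimed $O(n^{9/4})$ runtime, while the space bound is immediate because the algorithm maintains only a constant number of diagrams at any time.

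Before the runtime analysis I would verify correctness, i.e.\ that conditioned on acceptance the output $g(\sigma)$ is uniform on $\Omega_n$. For any fixed $\tau=(\tau_1,\ldots,\tau_m)\in\Omega_n$, its preimage under $g$ is exactly $\{f_k(\tau)\}_{k=0}^{n}$, where $f_k(\tau)=(\tau_1+k,\tau_2,\ldots,\tau_m)$. Each $f_k(\tau)$ has total area in $[n,2n]$, and since the bounding region $R$ contains every Young diagram of area at most $2n$, every $f_k(\tau)$ lies in $\Omega$. Under the stationary distribution, $\pi(f_k(\tau))=\lambda_n^{n+k}/Z$ depends only on $k$, so $\sum_{k=0}^{n}\pi(f_k(\tau))$ is the same constant for every $\tau\in\Omega_n$. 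Conditioning on the acceptance event therefore produces the uniform distribution on $\Omega_n$; the restriction to $|\sigma|\le 2n$ (rather than the full diameter of $R$) is exactly what guarantees that no preimage is dropped asymmetrically.

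For the time bound, I would generate each exact sample from $\pi$ via coupling from the past, which applies because $\m$ has a monotone coupling and the restricted poset has a unique minimum ($\emptyset$) and maximum configuration. This yields samples in expected $O(\tau)=O(n^2)$ steps of $\m$, with each step taking constant amortized work once the compressed column-height and row-width arrays from Proposition~\ref{claim:space} are in place, since adding or removing a single boundary square updates only $O(1)$ entries. Lemma~\ref{lem:rejection} then bounds the expected number of outer iterations by $O(n^{1/4})$, so the product gives expected total time $O(n^{9/4})$. The space bound follows from Proposition~\ref{claim:space} together with the fact that only $\sigma$ (and, during coupling from the past, the two extremal trajectories) need be stored simultaneously.

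The main obstacle is the uniformity verification rather than the complexity product: one must confirm that every $f_k(\tau)$ with $0\le k\le n$ truly lies inside $R$, which is what justifies capping at $|\sigma|\le 2n$ instead of using the wider diameter $2n\log(n)$ of the bounding region. Once symmetry of the preimages is established, the remaining argument is a routine composition of the mixing, salvage, and representation bounds already proved.
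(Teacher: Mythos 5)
Your proposal is correct and follows essentially the same route as the paper, whose proof is simply the composition of Proposition~\ref{claim:space}, Lemma~\ref{lem:rejection}, and Theorem~\ref{thm:mainmixing} ($O(n^{1/4})$ expected trials times $O(n^2)$ steps per exact sample via coupling from the past). Your additional verification that $g$ conditioned on acceptance is uniform on $\Omega_n$ --- via the symmetric preimages $\{f_k(\tau)\}_{k=0}^{n}$, each of weight $\lambda_n^{n+k}/Z$ and each contained in $R$ --- is a detail the paper relegates to the surrounding prose, and you have filled it in correctly.
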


\begin{proof}
  The proof directly follows from
  Proposition~\ref{claim:space},
  Lemma~\ref{lem:rejection},
  and Theorem~\ref{thm:mainmixing}.
\end{proof}

%\subsection{Bounding $\lambda_n$ and the Rejection Rate}\label{sec:appendix-proofs}
\subsection{Proofs of Lemma~\ref{lem:lambda-bound} and Lemma~\ref{lem:rejection}}\label{sec:appendix-proofs}
We prove Lemma~\ref{lem:lambda-bound} using bounds for $p(n)$ given in \cite{dp}.
Let
\[
  \mu(n) = \mu_n = \frac{\pi\sqrt{24n - 1}}{6},
\]
\[
  \nu(n) = \nu_n = \frac{\sqrt{12}}{24n-1},
\]
and
\[
		T(n) = \nu_n \left[\left(1 - \frac{1}{\mu_n}\right) 
			e^{\mu_n} + \frac{(-1)^n}{\sqrt{2}}e^{\mu_n/2}\right].
\]
%Then for all $n \ge 2$, we have $\abs{p(n) - T(n)} <  1 + \frac{16}{\mu_n^3}e^{\mu_n/2}$.
The function $T(n)$ is the sum of the three largest terms in the Hardy-Ramanujan
formula, and the explicit error bounds in \cite{dp} that we use were first proved by Lehmer \cite{lehmer}.
We only prove upper bounds in the following two proofs, as the lower bounds
are proved similarly.

\begin{lemma}\label{lem:p_bound}
For all $n \ge 2$, we have
\begin{align*}
	\left| p(n) - \nu_n \left(1 - \frac{1}{\mu_n}\right)e^{\mu_n}\right|
	~<~1 + e^{\mu_n/2}.
\end{align*}
\end{lemma}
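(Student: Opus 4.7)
The plan is to invoke Lehmer's explicit error bound for the Hardy--Ramanujan--Rademacher expansion of $p(n)$ (as formulated in \cite{dp, lehmer}), then combine it with a trivial algebraic manipulation and the triangle inequality. The key external input is that $T(n)$ collects the three largest terms in the series and that the explicit error bound from \cite{dp, lehmer} yields $|p(n) - T(n)| < 1$ for all $n \ge 2$. This is the workhorse: it says the first three terms already pin down $p(n)$ to within an additive constant.

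Given that, the remaining work is essentially algebra. By the definition of $T(n)$,
\[
  T(n) - \nu_n\!\left(1 - \frac{1}{\mu_n}\right)e^{\mu_n}
  ~=~ \frac{(-1)^n \nu_n}{\sqrt{2}}\, e^{\mu_n/2},
\]
so the triangle inequality gives
\[
  \left|\, p(n) - \nu_n\!\left(1 - \frac{1}{\mu_n}\right)e^{\mu_n}\right|
  ~\le~ |p(n) - T(n)| \;+\; \frac{\nu_n}{\sqrt{2}}\, e^{\mu_n/2}.
\]
To finish, one checks that $\nu_n/\sqrt{2} \le 1$ for $n \ge 2$, which is immediate because $\nu_n = \sqrt{12}/(24n-1)$ is decreasing in $n$ and $\nu_2 = \sqrt{12}/47$ is already well below $1$. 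Substituting $|p(n)-T(n)| < 1$ and $\nu_n/\sqrt{2} < 1$ into the triangle inequality then yields the desired bound $< 1 + e^{\mu_n/2}$.

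The only real obstacle is making sure Lehmer's bound is cited in a form strong enough to cover every $n \ge 2$, rather than just asymptotically large $n$. Classical statements are often phrased with vague constants, but the explicit versions in \cite{dp, lehmer} are already sharp enough that $|p(n) - T(n)| < 1$ holds universally; should any boundary case be in doubt, it can be verified directly by numerical substitution (e.g., for $n = 2, 3, \ldots, n_0$ up to some modest $n_0$). No new analytic estimate is needed beyond quoting the existing error bound and performing the one-line subtraction above.
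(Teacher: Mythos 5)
There is a genuine gap: the ``workhorse'' input you quote, namely $|p(n) - T(n)| < 1$ for all $n \ge 2$, is false, and no form of Lehmer's bound delivers it. The error in truncating the Hardy--Ramanujan--Rademacher series after its first few terms is \emph{not} bounded by an absolute constant; it grows like $e^{\mu_n/2}/\mu_n^3$. (Already the $k=3$ term of the Rademacher series has order $\nu_n e^{\mu_n/3}$, which tends to infinity, so a uniform bound of $1$ on $|p(n)-T(n)|$ cannot hold for large $n$; this is not a ``boundary case'' issue fixable by checking small $n$.) The explicit bound actually available from \cite{dp} (Lemma 2.3 and Proposition 2.4, going back to Lehmer) is of the form
\[
  \left| p(n) - T(n) \right| ~<~ 1 + \frac{16}{\mu_n^3}\, e^{\mu_n/2},
\]
and the presence of this exponentially growing error term is precisely the reason the lemma's right-hand side is $1 + e^{\mu_n/2}$ rather than something of the form $1 + o(1)\cdot e^{\mu_n/2}$. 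Your conclusion would otherwise be a strictly stronger (and still true-looking, but unjustified) statement resting on a false premise.

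The good news is that your skeleton is exactly the paper's: subtract off the $\frac{(-1)^n\nu_n}{\sqrt 2}e^{\mu_n/2}$ piece of $T(n)$, apply the triangle inequality, and absorb the leftovers into $e^{\mu_n/2}$. With the corrected error bound the final step becomes: show
\[
  \frac{\nu_n}{\sqrt{2}} + \frac{16}{\mu_n^3} ~\le~ 1 \qquad \text{for all } n \ge 2,
\]
which holds (at $n=2$ one has $\mu_2 = \pi\sqrt{47}/6 \approx 3.59$, so $16/\mu_2^3 \approx 0.35$ and $\nu_2/\sqrt{2} \approx 0.05$, and both summands decrease in $n$). Replacing your single inequality $\nu_n/\sqrt{2} \le 1$ with this combined one closes the gap and recovers the paper's proof.
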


\begin{proof}
By Lemma 2.3 and Proposition 2.4 in \cite{dp},
\[
  p(n)~<~T(n) + 1 + \frac{16}{\mu_n^3}e^{\mu_n/2}
  ~<~ \nu_n \left(1 - \frac{1}{\mu_n}\right) e^\mu_n  + 1 + e^{\mu_n/2}. \qedhere
\]
%The lower bound is proved similarly.
\end{proof}

\begin{comment}
\begin{lemma}\label{lem:bound1}
For all $n \ge 2$, we have
\[
	\frac{\pi}{\sqrt{6n}} < \mu_{n}-\mu_{n-1} < \frac{\pi}{\sqrt{6(n-1)}}.
\]
\end{lemma}
\begin{proof}
Observe that
\begin{align*}
	\mu_{n}-\mu_{n-1} %&= \frac{\pi}{6}\left(\sqrt{24(n-1)-1} - \sqrt{24n-1} \right)\\
	&=\frac{\pi}{6}\left(\frac{24}{\sqrt{24n-1} + \sqrt{24(n-1)-1}}\right)\\
	&<\frac{4\pi}{\sqrt{24n - 24} + \sqrt{24(n-1)}}
	=\frac{\pi}{\sqrt{6(n-1)}}.
\end{align*}
\end{proof}

\begin{lemma}\label{lem:bound2}
For all $n \ge 2$, we have
\[
  %\left(\frac{\mu_{n-1}}{\mu_n}\right)^3 \frac{\mu_{n-1}-1}{\mu_n - 1}
	\frac{\mu_n^3\left(\mu_{n-1}-1\right)}{\mu_{n-1}^3\left(\mu_n-1\right)}
	< \frac{n}{n-1}.
\]
\end{lemma}

\begin{proof}
%TODO: Should probably improve this proof at the very end.
Observe that
\begin{align*}
  f(n) &= n\mu_{n-1}^3\left(\mu_n-1\right) - (n-1)\mu_n^3\left(\mu_{n-1}-1\right)\\
      &=  \frac{\pi^3}{6^4}\Big(144\alpha(n)n^2 -150\alpha(n)n\\
		&\quad+ \sqrt{24n-1}\left(6-\pi\sqrt{24(n-1)-1}\right)\Big),
\end{align*}
where ${\alpha(n) = \sqrt{24n-1} - \sqrt{24(n-1)-1}}$.
It follows from Lemma~\ref{lem:bound1} that $f(n) > 0$ for $n \ge 2$.
\end{proof}
\end{comment}

{
% Renew proof name locally for this proof to keep standard proof formatting
% for later proofs.
%\renewenvironment{proof}{\noindent \textit{Proof of Lemma~\ref{lem:lambda-bound}.}}{}
\renewcommand{\proofname}{Proof of Lemma~\ref{lem:lambda-bound}.}
\begin{proof}
%We first prove the upper bound.
By Lemma~\ref{lem:p_bound},
\begin{align*}
  \lambda_n
		< \frac{1 + e^{\mu_{n-1}/2} + \nu_{n-1} \left(1 - \frac{1}{\mu_{n-1}}\right) 
			e^{\mu_{n-1}}}
		{-\left(1 + e^{\mu_n/2}\right) + \nu_n \left(1 - \frac{1}{\mu_n}\right) 
			e^{\mu_n}}
	= %\frac{e^{\mu_{n-1}}}{e^{\mu_n}} \cdot
    \frac{e^{\mu_{n-1}}}{e^{\mu_n}}
		 \left(\frac{ e^{-\mu_{n-1}} + e^{-\mu_{n-1}/2} + \nu_{n-1} \left(1 - \frac{1}{\mu_{n-1}}\right)}{
		 -\left(e^{-\mu_{n}} + e^{-\mu_{n}/2}\right) + \nu_n \left(1 - \frac{1}{\mu_n}\right)}\right),
\end{align*}
for all $n \ge 14$, because the lower bound for $p(n)$ is initially negative.
We have
\begin{align*}
	e^{-\mu_n} + e^{-\mu_n/2} < \frac{\nu_n}{n} \left(1 - \frac{1}{\mu_n}\right),
\end{align*}
for all $n \ge 65$, so it follows that
\begin{align*}
	\lambda_n ~<~ 
	\frac{e^{\mu_{n-1}}}{e^{\mu_n}}\left(\frac{\left(1 + \frac{1}{n-1} \right)\nu_{n-1} \left(1-\frac{1}{\mu_{n-1}}\right) }
	{\left(1-\frac{1}{n}\right) \nu_{n} \left(1-\frac{1}{\mu_n}\right)  } \right)
	= \frac{e^{\mu_{n-1}}}{e^{\mu_n}} \left(\frac{n}{n-1}\right)^2 \left(\frac{\mu_n^3\left(\mu_{n-1}-1\right)}{\mu_{n-1}^3\left(\mu_n-1\right)}\right),
\end{align*}
for all $n \ge 2$.
% http://math.stackexchange.com/questions/210591/upper-bound-of-exponential-function
Observe that $\mu_{n-1} - \mu_n < -\pi/\sqrt{6n}$ and
\[
	\frac{\mu_n^3\left(\mu_{n-1}-1\right)}{\mu_{n-1}^3\left(\mu_n-1\right)}
	~<~ \frac{n}{n-1},
\]
for all $n \ge 2$.
Using $e^x \le 1+x+x^2/2$, for all $x \le 0$,
\begin{align*}
	\lambda_n ~<~ \frac{e^{\mu_{n-1}}}{e^{\mu_n}} \left(\frac{n}{n-1}\right)^3
						< e^{-\frac{\pi}{\sqrt{6n}}}\left(\frac{n}{n-1}\right)^3
						\le \left(1-\frac{\pi}{\sqrt{6n}}+\frac{\pi^2}{12n}\right)\left(\frac{n}{n-1}\right)^3 <~ 1 - \frac{1}{\sqrt{n}},
\end{align*}
where the final inequality is true for all $n \ge 160$.
When $30 \le n < 160$, we verify the claim numerically. \qedhere
\end{proof}
%\renewenvironment{proof}{\noindent \textit{Proof}.}{}
% (fahrbach): I double checked all the bounds numerically.
}

% ------------------------------------------------------------------------------

Now we prove Lemma~\ref{lem:rejection} by
showing that the truncation scheme $g(\sigma)$ succeeds with sufficient
probability.
By Hardy-Ramanujan formula,
we have that for any constant $c>0$ and $n$ sufficiently large,
\begin{align*}
  \frac{1-c}{4\sqrt{3}n} e^{\pi\sqrt{2n/3}} ~\le~ p(n) ~\le~
  \frac{1+c}{4\sqrt{3}n} e^{\pi\sqrt{2n/3}}.
%  \label{eqn:partition-asymptotic}
\end{align*}
Letting $\lambda = \lambda_n$, 
the Hardy-Ramanujan formula implies
that for all $n \ge 20$,
%Lemma~\ref{lem:lambda-bound} implies that
\[
  e^{-\pi k / \sqrt{6n}} ~\le~ \lambda^k \le e^{k/n-\pi k
  / \sqrt{6n}}.% \label{eqn:lambda-asymptotic}
%  e^{-4k/\sqrt{n}} \le \lambda^k \le e^{-k/\sqrt{n}}.
\]
%We can make $c$ arbitrarily small in both of the above
%inequalities by selecting $n$ large enough. 

\begin{lemma}\label{lemma:Z_n}
Let $Z_n$ be the normalizing constant of the desired distribution.
Then we have
\[
  Z_n < 40 n^{3/4} \lambda^n p(n),
\]
for all $n$ sufficiently large.
\end{lemma}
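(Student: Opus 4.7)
The plan is to bound $Z_n = \sum_{k=0}^{|R|} a_{\Omega,k}\lambda^k \le \sum_{k \ge 0} p(k)\lambda^k$ (using $a_{\Omega,k} \le p(k)$) by showing that $p(k)\lambda^k / (p(n)\lambda^n)$ has a Gaussian-like shape in $k$, concentrated on a window of width $\Theta(n^{3/4})$ centered at the mode $k = n$. Summing this approximate Gaussian over integer $k$ then produces the $n^{3/4}$ factor in the target inequality.

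First, combining the two-sided Hardy-Ramanujan bounds on $p(k)$ quoted just before the lemma with the given upper bound $\lambda^k \le e^{k/n - \pi k/\sqrt{6n}}$ and lower bound $\lambda^n \ge e^{-\pi\sqrt{n/6}}$, I would derive, for all $k \ge 26$,
\[
  \frac{p(k)\lambda^k}{p(n)\lambda^n} \le \frac{1+c}{1-c}\cdot \frac{n}{k}\cdot e^{k/n}\cdot e^{h(k)-h(n)},
\]
where $h(k) = \pi\sqrt{2k/3} - \pi k/\sqrt{6n}$. The key algebraic identity, obtained from $\sqrt{2n/3} = 2\sqrt{n/6}$ and a short factoring argument, is
\[
  h(k) - h(n) = -\pi\sqrt{n/6}\parens*{\sqrt{k/n} - 1}^2,
\]
which displays the claimed concentration explicitly and attains its maximum at $k=n$.

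Next I would estimate $\sum_{k \ge 26} \frac{n}{k}e^{k/n}e^{h(k)-h(n)}$ by comparison to an integral after the substitution $k = n(1+v)^2$. This substitution turns the exponent $h(k)-h(n)$ into $-\pi\sqrt{n/6}\, v^2$, a Gaussian with standard deviation $\Theta(n^{-1/4})$, and produces a Jacobian $2n(1+v)\, dv$, so the sum is on the order of $n\cdot n^{-1/4} = n^{3/4}$. Choosing $c$ small (e.g., $c = 1/10$) and carefully tracking the prefactors $(1+c)/(1-c)$, $n/k$, and $e^{k/n}$ on the bulk of the integration region $|v| \le v_0$, I would show the total is bounded by $40 n^{3/4}$.

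The main obstacle will be the tail $|v| \ge v_0$, since the factor $e^{k/n} = e^{(1+v)^2}$ grows without bound in $k$. I would handle this by noting that $(1+v)^2 - \pi\sqrt{n/6}\,v^2$ becomes a downward-opening parabola once $n$ is large, so the tail integrand decays like $e^{-(\pi\sqrt{n/6} - 1)v^2 + 2v + 1}$, which is exponentially smaller than the peak contribution. The finitely many terms with $k < 26$ contribute only a constant, negligible compared to $p(n)\lambda^n$ for $n$ large enough. Combining these pieces yields $Z_n < 40 n^{3/4} \lambda^n p(n)$ as claimed.
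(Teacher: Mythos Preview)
Your approach is correct and shares its core estimate with the paper---both derive the identity $h(k)-h(n)=-\pi\sqrt{n/6}\,(\sqrt{k/n}-1)^2$ from the Hardy--Ramanujan asymptotics---but the two proofs diverge in how that estimate is used. The paper evaluates the ratio $f(n\pm k)/f(n)$ at the single shift $k=(\sqrt{n}+n^{1/4})^2-n\le 3n^{3/4}$, obtains a value below $0.85$, and then invokes the log-concavity of $\{p(k)\lambda^k\}_{k\ge 26}$ to bound the whole sum by two geometric series, yielding $Z_n\le k f(n)\bigl(\tfrac{1}{1-0.85}+\tfrac{1}{1-0.85}\bigr)<40\,n^{3/4}f(n)$ with essentially no further calculation. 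You instead apply the pointwise Gaussian bound at every $k$ and integrate via the substitution $k=n(1+v)^2$, which costs you the extra tail analysis (controlling the growing factor $e^{k/n}$) but buys independence from the log-concavity theorem of \cite{dp,jln}. One small point: the Hardy--Ramanujan bound with fixed $c$ holds only for $k\ge k_0(c)$, not literally all $k\ge 26$; but as you note for $k<26$, the finitely many remaining terms are $O(1)$ and negligible against $p(n)\lambda^n$, so this is easily patched.
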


\begin{proof}
Clearly
\[
  Z_n \le \sum_{k=0}^\infty p(k)\lambda^k.
\]
We further know that $f(k) = p(k)\lambda^k$ is unimodal with a maximum at $k=n$.
By the log-concavity of $\{f(k)\}_{k =26}^\infty$, we have
\[
  \frac{f(n+k)}{f(n)} \ge \frac{f(n+2k)}{f(n+k)} %\text{ and } \frac{f(n-k)}{f(n)} \ge \frac{f(n-2k)}{f(n-k)},
\]
and
\[
 \frac{f(n-k)}{f(n)} \ge \frac{f(n-2k)}{f(n-k)},
\]
for all $k \ge 1$.
Therefore, we can bound $Z_n$ as 
\begin{align*}
  Z_n \le k f(n) \left(\frac{1}{1-\frac{f(n+k)}{f(n)}} 
          + \frac{1}{1-\frac{f(n-k)}{f(n)}}\right), %\label{eqn:lc-bound}
\end{align*}
for any $k \ge 1$.
Specifically, if both $f(n+k)/f(n)$ and $f(n-k)/f(n)$ are at most some fixed constant less than~$1$, then $Z_n=O(k f(n))$.
%Using \eqref{eqn:partition-asymptotic} and 
%\eqref{eqn:lambda-asymptotic}, we see that
Using the bounds above,
\begin{align*}
  f(k) ~=~ p(k)\lambda^{k}
  ~\le~ \frac{1+c}{4\sqrt{3}k} e^{k/n-\pi\left(k - 2\sqrt{kn}\right)/\sqrt{6n}}
  ~=~ \frac{1+c}{4\sqrt{3}k} e^{k/n - \pi\left(\sqrt{k}-\sqrt{n}\right)^2/\sqrt{6n} + \pi\sqrt{n/6}}.
%  \frac{1-c}{4\sqrt{3}k} e^{\pi\sqrt{2k/3} - 4k/\sqrt{n}} \le
%  f(k) \le \frac{1+c}{4\sqrt{3}k} e^{\pi\sqrt{2k/3} - k/\sqrt{n}}.
\end{align*}
Letting $n+k = (\sqrt{n}+n^{1/4})^2$, for $n$ large enough,
\[
  f\left(\left(\sqrt{n}+n^{1/4}\right)^2\right)
    ~\le~ \frac{1+c}{4\sqrt{3}n} e^{1.1 - \pi/\sqrt{6} + \pi\sqrt{n/6}}.
\]
We can then bound the density value at $(\sqrt{n}+n^{1/4})^2$ relative to the
maximum by
\begin{align*}
  \frac{f\left(\left(\sqrt{n}+n^{1/4}\right)^2\right)}{f(n)}
  ~\le~ \frac{ \frac{1+c}{4\sqrt{3}n} e^{1.1 - \pi/\sqrt{6} + \pi\sqrt{n/6}} }{ \frac{1-c}{4\sqrt{3}n} e^{\pi\sqrt{2n/3} - \pi\sqrt{n/6}} }
~=~ \frac{1+c}{1-c} e^{1.1 - \pi/\sqrt{6}}.
\end{align*}
Taking $c \le 0.01$, we have
\[
  \frac{f\left(\left(\sqrt{n}+n^{1/4}\right)^2\right)}{f(n)} ~\le~
      \frac{1.01}{0.99}  e^{1.1-\frac{\pi}{\sqrt{6}}} ~<~ 0.85.
\]
Similarly, for $n$ sufficiently large,
%\[
%f\left(\left(\sqrt{n}-n^{1/4}\right)^2\right)
%  \le \frac{1+c}{4\sqrt{3}n} e^{1.1 - \pi/\sqrt{6} + \pi\sqrt{n/6}},
%\frac{1}{4n\sqrt{3}} e^{\pi\sqrt{\frac{n}{6}}} e^{-\frac{\pi}{\sqrt{6}}},
%\]
%and thus
\[
  \frac{f\left(\left(\sqrt{n}-n^{1/4}\right)^2\right)}{f(n)} ~<~ 0.85.
\]
Therefore, we have $Z_n < 40 n^{3/4} \lambda^n p(n)$ using the fact
that $k \le 3n^{3/4}$.
%, by \eqref{eqn:lc-bound} and using the fact that
%$k \le 3n^{3/4}$, we have that 
%\begin{equation}
%  Z_n \le 12 n^{3/4} \lambda^n p(n),
%\end{equation}
%which proves the claim.
\end{proof}
% ------------------------------------------------------------------------------

{
\renewcommand{\proofname}{Proof of Lemma~\ref{lem:rejection}.}
\begin{proof}
We use Lemma~\ref{lemma:Z_n} and Lemma~\ref{lem:lambda-bound} to bound the
probability that~$g(\sigma)$ generates a partition of $n$ successfully.
Therefore, we have
\begin{align*}
  \Pr[g(\sigma) \text{ generates a partition of } n]
  &= \sum_{k=0}^{n} \frac{\lambda^{n+k}p(n)}{Z_n}\\
  &\ge \frac{1}{40n^{3/4}} \sum_{k=0}^{n} \lambda^k\\
  &\ge \frac{1}{40n^{3/4}} \sum_{k=0}^{n} \left(1-\frac{2}{\sqrt{n}}\right)^k\\
  &= \frac{1}{40n^{3/4}} \cdot \frac{\sqrt{n}}{2} \left(1 - \left(1 - \frac{2}{\sqrt{n}}\right)^{n+1}\right)\\
  &\ge \frac{1}{160n^{1/4}}. \qedhere
  %&\ge \frac{1}{12n^{3/4}} \sqrt{n} \left(1-\frac{2}{\sqrt{n}}\right)^{\sqrt{n}}
  %\ge \frac{1}{100n^{1/4}},
\end{align*}
%which proves the lemma.
\end{proof}
}

    % TODO Move some of this to markovchain rest to appendix
\section{Sampling in Other Graded Posets}
We demonstrate the versatility of using a Markov chain on the Hasse diagram of
a graded poset to sample elements of fixed rank.  When this chain is
rapidly mixing for all $\lambda_t$ with~${t \in [R^2]}$, we can apply Boltzmann
sampling and Theorem~\ref{crossover_thm} to generate approximately uniform samples in polynomial
time.
%We also note that we can handle additional restricted classes of the
%integer partitions that do not fit in to the context of restricted
%regions, such as partitions of $n$ with fixed minimum difference between
%consecutive pieces.
Similar to region-restricted integer partitions, analogous notions of
self-reducibility apply to restricted families of permutations and lozenge
tilings, so there exist fully polynomial-time approximation schemes for
these enumerations problems because we can efficiently sample elements of a given
rank from their respective posets~\cite{jerrum2003counting}.
%Therefore, because we can sample elements
%of a given rank from these posets, there exist fully polynomial-time
%approximation schemes for the respective enumeration problems \cite{jerrum2003counting}.

%-------------------------------------------------------------------------------
% http://tex.stackexchange.com/questions/119984/subfigures-side-by-side-with-captions
% 0.155

\subsection{Permutations with Fixed Rank}
In the first case, we consider permutations of $n$ elements with a fixed number
of inversions.  The Hasse diagram in this setting connects permutations that
differ by one adjacent transposition.  This partial order is in bijection with
the \emph{weak Bruhat order} on the symmetric group.
%so the Markov chain walks along edges of the permutohedron.
In the unbiased case ($\lambda = 1$), the nearest neighbor Markov chain mixes
in time $\Theta(n^3 \log(n))$~\cite{wilson}.  With constant bias the chain is known
to converge in time $\Theta(n^2)$ \cite{bbhm, GPR09}.  The number of
permutations of $n$ with $k$ inversions is known to be log-concave in $k$, so
standard Boltzmann sampling techniques can be used.  However, using our balanced bias method,
we avoid the need for bounds on the growth of inversion numbers in
restricted settings, such as permutations where at least $i$ of the first
$j$ elements are in the first half of the permutation.
%See Figure~\ref{fig:inversions} for random
%permutations in $S_{100}$ sampled from various rank levels of the inversion
%poset represented as Rothe diagrams.
Figure~\ref{fig:inversions} illustrates the distribution of inversions
of random permutations in $S_{100}$ sampled from various ranks of the inversion
poset as Rothe diagrams~\cite{kerber2013applied}.

\begin{figure}[H]
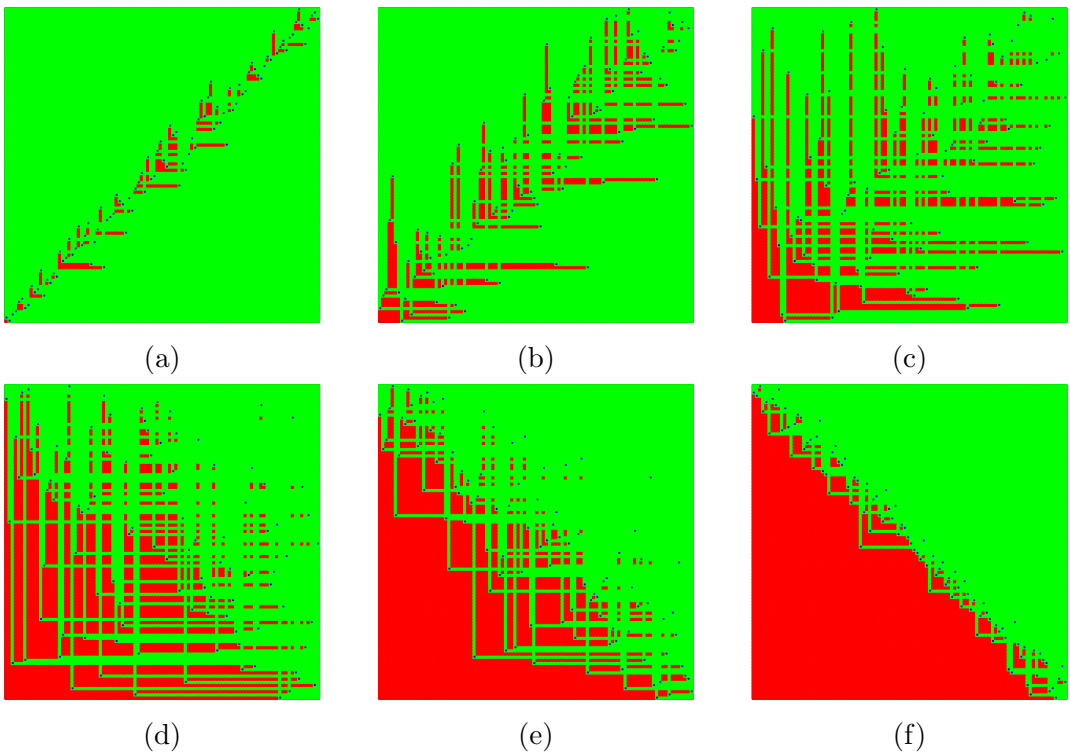

\centering
\begin{subfigure}[t]{0.26\textwidth}
  \includegraphics[width=1.0\textwidth]{images/rothe-005-percent}
  \caption{}
\end{subfigure}
\hspace{0.4cm}
\begin{subfigure}[t]{0.26\textwidth}
  \includegraphics[width=1.0\textwidth]{images/rothe-020-percent}
  \caption{}
\end{subfigure}
\hspace{0.4cm}
\begin{subfigure}[t]{0.26\textwidth}
  \includegraphics[width=1.0\textwidth]{images/rothe-040-percent}
  \caption{}
\end{subfigure}

\begin{subfigure}[t]{0.26\textwidth}
  \includegraphics[width=1.0\textwidth]{images/rothe-060-percent}
  \caption{}
\end{subfigure}
\hspace{0.4cm}
\begin{subfigure}[t]{0.26\textwidth}
  \includegraphics[width=1.0\textwidth]{images/rothe-080-percent}
  \caption{}
\end{subfigure}
\hspace{0.4cm}
\begin{subfigure}[t]{0.26\textwidth}
  \includegraphics[width=1.0\textwidth]{images/rothe-095-percent}
  \caption{}
\end{subfigure}
\caption{Random permutations with
(a)~5, (b)~20, (c)~40, (d)~60, (e)~80, (f)~95 percent of $\binom{100}{2}$ inversions.}
%$\{5, 20, 40, 60, 80, 95\}$-percent of $\binom{100}{2}$ inversions.}
\label{fig:inversions}
\end{figure}

\subsection{Lozenge Tilings with Fixed Average Height}
Lozenge tilings are tilings of a triangular lattice region with pairs of
equilateral triangles that share an edge.  There is a well-studied height
function that maps hexagonal lozenge tilings bijectively to plane partitions
lying in an $n\times n\times n$ box (see, e.g., \cite{LRS01}), and it follows that lozenge
tilings with a fixed average height of $k$ are precisely the plane partitions
with volume~$k$.  The Markov chain that adds or removes single cubes on the
surface of the plane partition (corresponding to rotating three nested lozenges
180 degrees) is known to mix rapidly in the unbiased case.  Caputo et
al.~\cite{caputo} studied the biased version of this chain with a preference
toward removing cubes, and showed that this chain converges in $O(n^3)$ time.
Applying the balanced bias method, we can use Boltzmann sampling to generate random
lozenge tilings with any target average height in polynomial time, as shown in
Figure~\ref{fig:lozenge}.

\begin{figure}[H]
\centering
%\vspace{-0.25cm}
\begin{subfigure}[t]{0.45\textwidth}
  \includegraphics[width=1.0\textwidth]{}
  \caption{}
\end{subfigure}
\hspace{0.6cm}
\begin{subfigure}[t]{0.45\textwidth}
  \includegraphics[width=1.0\textwidth]{}
  \caption{}
\end{subfigure}

\begin{subfigure}[t]{0.45\textwidth}
  \includegraphics[width=1.0\textwidth]{}
  \caption{}
\end{subfigure}
\hspace{0.6cm}
\begin{subfigure}[t]{0.45\textwidth}
  \includegraphics[width=1.0\textwidth]{}
  \caption{}
\end{subfigure}
%\caption{Random lozenge tilings with average height $\{5, 15, 35, 50\}$-percent
%of $75^3$.}
\caption{Random lozenge tilings with average height (a)~5, (b)~15, (c)~35, (d)~50 percent of~$75^3$.}
\label{fig:lozenge}
\end{figure}

%0.235
%0.3080

\begin{comment}
\subsection{Integer Partitions with Minimal Gaps Between Pieces}
We note that we can handle additional restricted versions of the integer
partition problem that do not fit in the restricted regions setting, such as
partitions of $n$ into pieces of distinct sizes.   The four allowable
partitions of 6 are:  $(6), (5,1), (4,2), (3,2,1)$.  The chain rejects
moves that would create two adjacent horizontal edges in the upper envelope of
the Young diagram.  As in~\cite{GPR09,LPW}, we can apply exponential metrics:
given any two configurations that differ by a single move, there are again at
most 4 ``bad moves'' that could potentially increase the distance during the
coupling.  At most 2 can be realized at any given time, and using the
exponential metric that assigns to each cell $(x,y)$ weight
$\lambda^{-(x-y/2)}$, instead of $\lambda^{-(x-y)/2}$ as was used
in~\cite{GPR09, LPW}.
\end{comment}
       % loz and perm

\bibliographystyle{plain}
\bibliography{bib}

%\appendix
%\input{appendix}

\end{document}